\documentclass[sigconf,nonacm,screen]{acmart}

\usepackage{colortbl}
\usepackage{pifont}
\usepackage{makecell}
\usepackage{xspace}
\usepackage{balance}
\usepackage{booktabs}
\usepackage{multirow}

\definecolor{myred}{rgb}{0.7, 0.3, 0.0}
\definecolor{myblue}{HTML}{0a41b8}
\definecolor{mygreen}{HTML}{056b34}
\definecolor{mypurple}{HTML}{5d1e8b}
\definecolor{rowgray}{HTML}{f0f0f0}
\definecolor{rowhighlight}{HTML}{f0edff}
\newcommand{\best}[1]{\textbf{#1}}

\newcommand{\method}{HyPASE\xspace}

\AtBeginDocument{%
  }

\copyrightyear{2026}
\acmYear{2026}
\setcopyright{none}
\acmConference[MM '26]{ACM Multimedia 2026}{November 10--14, 2026}{Rio de Janeiro, Brazil}
\acmBooktitle{ACM Multimedia 2026}
\renewcommand\footnotetextcopyrightpermission[1]{}
\begin{document}

\title{HyPASE: Hyperbolic Geometry for Parameter-Efficient Speech Emotion Fine-Tuning Framework for Large Audio-Language Models}


\author{Tian Jin}
\orcid{0009-0008-1037-6729}
\affiliation{%
  \institution{Tongji University}
  \city{Shanghai}
  \country{China}
}
\affiliation{%
  \institution{The Chinese University of Hong Kong, Shenzhen}
  \city{Shenzhen}
  \country{China}
}
\email{226085012@link.cuhk.edu.cn}

\author{Ruikang Zhang}
\orcid{0009-0007-5284-6090}
\affiliation{%
  \institution{Tongji University}
  \city{Shanghai}
  \country{China}
}
\affiliation{%
  \institution{Peking University}
  \city{Beijing}
  \country{China}
}
\email{2601213508@stu.pku.edu.cn}

\author{Zefeng Zhao}
\authornote{Corresponding authors: Zefeng Zhao and Jin Zeng.}
\orcid{0000-0002-5859-2622}
\affiliation{%
  \institution{The Chinese University of Hong Kong, Shenzhen}
  \city{Shenzhen}
  \country{China}
}
\email{zefengzhao@cuhk.edu.cn}

\author{Ding Luo}
\orcid{0009-0001-3611-8284}
\affiliation{%
  \institution{Tongji University}
  \city{Shanghai}
  \country{China}
}
\email{2452687@tongji.edu.cn}

\author{Jin Zeng}
\authornotemark[1]
\orcid{0000-0002-0180-7733}
\affiliation{%
  \institution{Tongji University}
  \city{Shanghai}
  \country{China}
}
\email{zengjin@tongji.edu.cn}

\renewcommand{\shortauthors}{Tian et al.}



\begin{abstract}
Large Audio-Language Models (LALMs) excel at general speech understanding. However, adapting them to fine-grained tasks like Speech Emotion Recognition (SER) remains a significant bottleneck. Current Parameter-Efficient Fine-Tuning (PEFT) methods typically operate in flat Euclidean space. This geometry fails to capture the multi-granularity nature of emotion cues, which range from low-level prosody to high-level semantics. To address this, we propose \method, a hyperbolic PEFT framework for LALM-based SER. \method leverages the Poincar\'{e} ball model, using the hyperbolic radius as an explicit proxy for representational granularity. The framework consists of two core components: a Hyperbolic Geometric Adapter (HGA) for layer-adaptive weight modulation, and an Emotion-aware Multi-capacity Cross-modal Aggregator (EMCA) that compresses multi-scale features into compact audio prefixes. Empirical results on standard benchmarks show that \method outperforms Euclidean PEFT baselines across all metrics on MELD and achieves a notable Unweighted Accuracy gain on IEMOCAP, particularly in class-imbalanced emotion recognition, with the accompanying slight Weighted Accuracy trade-off reflecting hyperbolic space's geometric prioritization of minority-class representations. Furthermore, \method achieves robust zero-shot cross-dataset generalization within a constrained parameter budget. By grounding the adaptation process in hyperbolic geometry, \method offers a highly efficient path for LALM fine-tuning.
\end{abstract}

\begin{CCSXML}
<ccs2012>
<concept>
<concept_id>10010147.10010178.10010179.10010183</concept_id>
<concept_desc>Computing methodologies~Speech recognition</concept_desc>
<concept_significance>300</concept_significance>
</concept>
<concept>
<concept_id>10010147.10010178</concept_id>
<concept_desc>Computing methodologies~Artificial intelligence</concept_desc>
<concept_significance>500</concept_significance>
</concept>
</ccs2012>
\end{CCSXML}

\ccsdesc[500]{Computing methodologies~Artificial intelligence}
\ccsdesc[300]{Computing methodologies~Speech recognition}

\keywords{Large Audio-Language Models, Parameter-Efficient Fine-Tuning, Hyperbolic Geometry, Speech Emotion Recognition, Multi-Modal Learning}

\maketitle

\section{Introduction}

Large Audio-Language Models (LALMs) deeply couple pre-trained acoustic encoders with large-scale language models, and demonstrate strong zero-shot understanding capabilities in tasks such as speech instruction following, audio description, and cross-modal reasoning~\cite{qwen2audio2024,salmonn2023}. Systems represented by Qwen2-Audio and SALMONN have achieved unified modeling of acoustic perception and language reasoning, establishing LALMs as a foundational framework for speech understanding. Nevertheless, how to transfer this general perception capability to fine-grained emotion understanding tasks such as Speech Emotion Recognition (SER) remains an underexplored problem. Recent LALM-based SER studies have explored speech-aware Q-Former alignment, context-aware instruction tuning, and chain-of-thought-based stabilization~\cite{qformer_ser2024,du2025eaa,zhao2025c2ser,wang2025cot4ad}, but how to exploit the inherent geometric structure of emotion representations during parameter-efficient adaptation remains open.

The central challenge lies in the inherent hierarchical nature of emotion cues: spanning from global prosodic contours to local spectral variations and higher-level semantic context, these signals are structured across multiple granularity levels~\cite{scherer1986,banse1996,gobl2003,egemaps2016,larrouy2024,tzirakis2021,dutta2024}. In LALM encoders, this hierarchy is mirrored by the progressive evolution of representations from shallow acoustic templates to deep emotion-discriminative abstractions. Despite this, existing PEFT methods collapse these rich structures into flat Euclidean embeddings by applying a uniform adaptation strategy without any awareness of representational depth, leaving a consistent performance gap versus full fine-tuning on standard benchmarks such as IEMOCAP~\cite{iemocap2008} and MELD~\cite{meld2019,mstr2024,emosllm2025}. How to capture this multi-granularity hierarchy in a geometrically explicit manner within a limited parameter budget constitutes the central challenge for LALM emotion adaptation.

While Parameter-Efficient Fine-Tuning (PEFT) has emerged as the standard for model adaptation, and methods such as LoRA~\cite{lora2022}, Adapter~\cite{adapters2019}, and Prefix Tuning~\cite{prefix2021} have been applied to LALM emotion adaptation~\cite{emosllm2025,dpm2025}, their reliance on Euclidean transformations fundamentally lacks the geometric inductive bias needed to preserve the hierarchical depth of audio-language representations~\cite{poincare2017}. Applying the same adaptation strategy uniformly across all encoder layers leaves two geometric gaps unaddressed. First, Euclidean adaptation modifies representations without any reference to their hierarchical depth, so each layer's adaptation magnitude remains unconstrained relative to the encoder hierarchy. Second, Euclidean pooling of frame representations discards radial distinctions that encode granularity levels, collapsing multi-granularity structure before it can inform emotion recognition.

The two gaps above point to a common need: a geometric framework capable of natively encoding the hierarchical depth of encoder representations. A direct empirical answer comes from Gromov $\delta$-hyperbolicity analysis on Qwen2-Audio encoder representations: all layers exhibit substantially lower $\delta$ than random Gaussian embeddings, confirming that the encoder's learned geometry is intrinsically tree-like rather than flat. Hyperbolic space, a Riemannian manifold of constant negative curvature, accommodates tree-like hierarchical structures through exponential volume growth with radius~\cite{poincare2017,hgcn2019}. In the Poincar\'{e} ball, geodesic distance from the origin encodes hierarchical depth: general patterns cluster at smaller radii, while fine-grained discriminative semantics lie at larger radii, directly mirroring the progressive abstraction in LALM encoders. Extending PEFT to hyperbolic space therefore enables geometric alignment of the emotion representation hierarchy with minimal parameters. This idea has received initial validation in vision-language models~\cite{hyperet2024} and multimodal safety alignment~\cite{hysac_cvpr2025}, but remains unexplored for LALM adaptation.

This gap is the starting point of the present work. We propose \method, which, to the best of our knowledge, is the first framework to integrate hyperbolic geometry directly into LALM parameter adaptation. The core geometric contribution is the Hyperbolic Geometric Adapter (HGA), which performs layer-adaptive hyperbolic radius modulation in the weight space of the audio encoder. Complementing HGA, the Emotion-aware Multi-capacity Cross-modal Aggregator (EMCA) is a lightweight utterance-level readout module. It compresses HGA-enhanced frame representations into compact audio prefixes for the frozen language model through task-conditioned multi-branch aggregation and hyperbolic fusion. The main contributions of this paper are:

(1) \textbf{A Geometry-Grounded Framework.} We propose \method, introducing hyperbolic geometry into PEFT for LALMs to enable parameter-efficient Speech Emotion Recognition.

(2) \textbf{A Two-Level Adaptation Pipeline.} We design an end-to-end pipeline featuring a Hyperbolic Geometric Adapter (HGA) for layer-wise weight modulation, and an EMCA module to fuse multi-scale features into structured audio prefixes.

(3) \textbf{Minority-Class-Aware Efficiency and Generalization.} Empirical results show that \method outperforms standard Euclidean PEFT methods on MELD across all metrics and substantially improves minority-class recognition on IEMOCAP (Unweighted Accuracy), with the accompanying Weighted Accuracy trade-off being an intentional consequence of hyperbolic space's prioritization of underrepresented classes; \method also delivers robust zero-shot generalization, all within a minimal parameter footprint.

\begin{figure*}[t!]
  \centering
  \includegraphics[width=\textwidth]{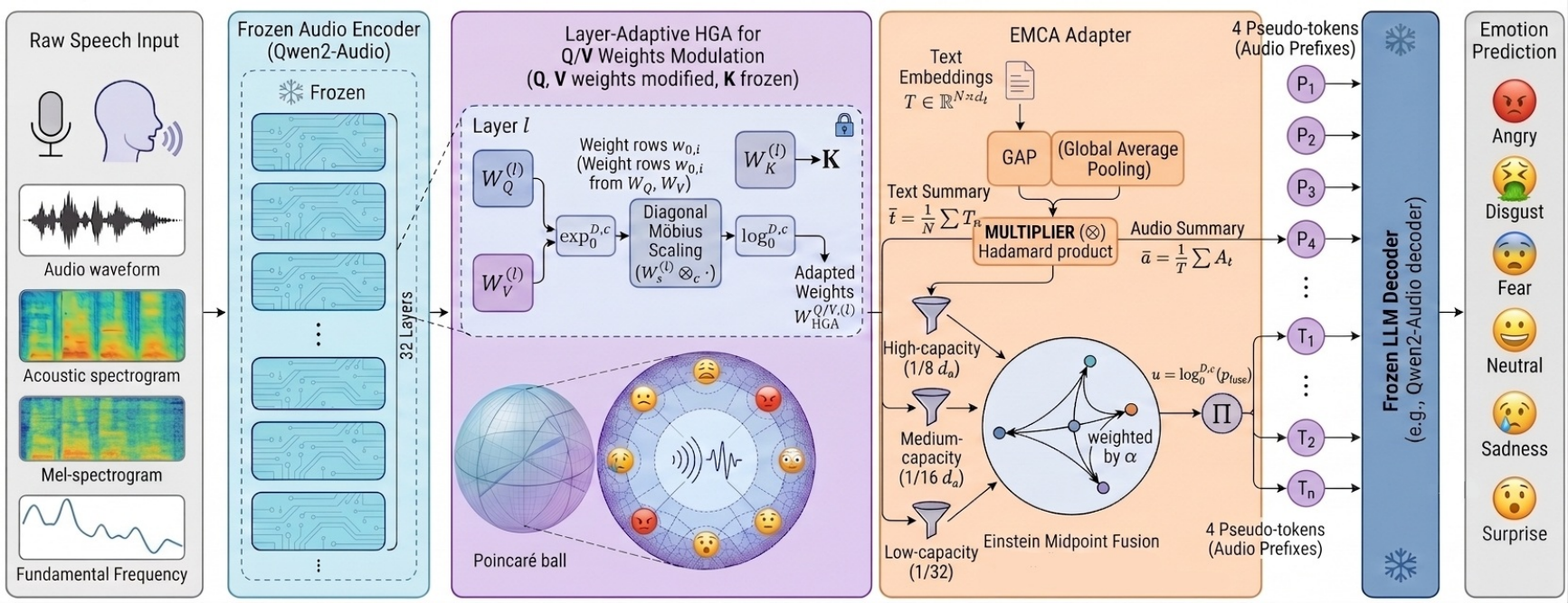}
  \caption{Overall framework of \method. HGA is injected into each layer of the frozen audio encoder, mapping Q/V weight rows through $\exp_{\mathbf{0}}^{D,c}$, M\"{o}bius scaling, and $\log_{\mathbf{0}}^{D,c}$ for layer-adaptive hyperbolic radius modulation. EMCA converts audio frame representations and instruction prompt embeddings into $K$ audio prefix tokens: it applies task-conditioned gating, compresses through three capacity branches (high/medium/low), fuses via the Einstein midpoint, and expands into prefix tokens. The prefix tokens are projected by the multimodal projector and fed into the frozen language model alongside text token embeddings to generate emotion labels.}
  \label{fig:overview}
\end{figure*}

\section{Related Work}

\subsection{Speech Emotion Recognition}

Speech emotion recognition has undergone a paradigm shift from hand-crafted acoustic features to self-supervised representation learning. Low-level acoustic feature methods exemplified by eGeMAPS~\cite{egemaps2016} established early baselines. Self-supervised pre-training models including HuBERT~\cite{hsu2021}, WavLM~\cite{chen2022wavlm}, and Emotion2vec~\cite{ma2023emotion2vec} advanced SER performance by learning general representations on large unlabeled speech corpora. Multi-scale temporal modeling~\cite{mstr2024} and multi-granularity cross-modal fusion~\cite{mseadapter2024} further confirmed the critical role of hierarchical emotion cues in recognition performance. More recently, LALM-based SER has attracted growing attention along three technical directions. Alignment-oriented methods such as EmoQ~\cite{qformer_ser2024} and EAA~\cite{du2025eaa} use Q-Former or cross-attention bridges for emotion-conditioned feature extraction; instruction-tuning approaches including EmoSLLM~\cite{emosllm2025} and C$^2$SER~\cite{zhao2025c2ser} exploit chain-of-thought prompting for more stable predictions; and reinforcement-based methods such as DPM~\cite{dpm2025}, EMO-RL~\cite{li2025emorl}, and EmoAgent-R1~\cite{fang2026emoagent} refine emotion outputs through reward shaping and dynamic agent specialization. Regardless of their specific strategies, however, all these models treat the audio encoder as a static Euclidean feature extractor, ignoring the geometric potential for representation reshaping during adaptation.

\subsection{Large Audio-Language Models}

Research coupling large-scale language models with pre-trained acoustic encoders has advanced rapidly. Recent surveys frame LALMs as reasoning-centric multimodal systems spanning auditory processing, dialogue, and trustworthiness~\cite{yang2025lalmsurvey,xiao2026dualstreamdecoupledlearningtemporal}. Whisper~\cite{radford2023robust} established a strong acoustic feature extraction foundation through weak supervision on large-scale multilingual data. SALMONN~\cite{salmonn2023} bridges dual acoustic encoders with an LLM through a Q-Former for unified audio-language reasoning. Qwen2-Audio~\cite{qwen2audio2024} proposes a unified multi-task audio understanding framework covering speech recognition, audio description, and emotion understanding in a single model. These advances notwithstanding, prevailing PEFT methods such as LoRA and Adapter operate as Euclidean affine transformations and cannot reshape the geometric hierarchy of encoder representations, leaving the intrinsic hyperbolic structure of multi-granularity emotion cues unexploited during adaptation.

\subsection{Hyperbolic Learning Methods}

Hyperbolic geometry has attracted growing attention in deep learning, driven by its ability to represent hierarchical structures with low distortion~\cite{poincare2017}. HGCN~\cite{hgcn2019} introduced hyperbolic operations into graph convolutional networks and achieved notable improvements on knowledge graphs and hierarchical classification tasks. In multimodal foundation models, HyperET~\cite{hyperet2024} introduced hyperbolic geometric adapters for image-text alignment, and HySAC~\cite{hysac_cvpr2025} applied hyperbolic space to multimodal safety alignment. In multimodal SER, HyFuse~\cite{hyfuse2025,fu2026emotioncolliderdualhyperbolic} incorporated hyperbolic fusion into the feature aggregation stage to better integrate heterogeneous emotion cues from multiple modalities. Unlike these contemporary works, which apply hyperbolic geometry only at the downstream fusion stage, \method performs a stronger intervention by directly modulating the internal weights of the frozen backbone through HGA, reshaping representations at the encoder level rather than post-hoc~\cite{cao2025causalctrl,ding2025dualsg}.

\section{Method}

\method has two trainable components: HGA, which modulates the hyperbolic radius of weight row embeddings in each audio encoder layer, and EMCA, a lightweight module that converts frame representations into compact audio prefixes via task-conditioned gating and hyperbolic fusion. The audio encoder and language decoder are frozen; only HGA (Q/V projections), residual adapters, EMCA, and the native multimodal projector are trained. Euclidean counterparts and disabled losses are explored as ablation conditions in Section~\ref{sec:ablation}.

\subsection{Hyperbolic Geometry Preliminaries}
\label{sec:prelim}

\subsubsection{Poincar\'{e} Ball and Hyperbolic Radius}

We work in the $d$-dimensional Poincar\'{e} ball $\mathcal{D}_c^d := \{\mathbf{x} \in \mathbb{R}^d : c\lVert \mathbf{x} \rVert^2 < 1\}$ with curvature $-c$ ($c > 0$). The key quantity for our framework is the \emph{hyperbolic radius}---the geodesic distance from a point to the origin:
\begin{equation}
  \operatorname{Rad}_{\mathbf{x}} := d_c(\mathbf{x},\, \mathbf{0}) = \frac{2}{\sqrt{c}} \operatorname{artanh}(\sqrt{c}\lVert \mathbf{x} \rVert).
  \label{eq:hyp_radius}
\end{equation}

$\operatorname{Rad}_{\mathbf{x}}$ naturally characterizes the granularity hierarchy of representations: a smaller radius corresponds to more coarse-grained shared patterns, while a larger radius corresponds to more fine-grained discriminative semantics, closely aligned with the feature abstraction progression from shallow to deep in audio encoders. Both HGA radius modulation and EMCA branch radius ordering are grounded in this geometric interpretation.

\subsubsection{Hyperbolic Operations}

Features are mapped between the Euclidean tangent space at the origin and the Poincar\'{e} ball via the standard exponential map $\exp_{\mathbf{0}}^{D,c}$ and logarithmic map $\log_{\mathbf{0}}^{D,c}$; M\"{o}bius scalar multiplication $s \otimes_c \mathbf{x}$ and diagonal-matrix multiplication $\mathbf{D} \otimes_c \mathbf{x}$ are defined through these maps. All hyperbolic operations in HGA act independently on each row of the weight matrix.

\subsection{Overall Architecture}
\label{sec:overall}

The frozen audio encoder ($L$ layers, hidden dimension $d_a$, each injected with HGA) encodes audio input $\mathbf{x}_a$ into frame-level states $\mathbf{A} \in \mathbb{R}^{T \times d_a}$. The frozen LM embedding layer maps the emotion classification instruction prompt $\mathbf{x}_t$ (a fixed task-specific template specifying the target emotion categories; not ASR transcription of the speech) to token embeddings $\mathbf{T} \in \mathbb{R}^{N \times d_t}$. EMCA takes $(\mathbf{A}, \mathbf{T})$ and produces $K$ audio prefix tokens via task-conditioned gating, three-branch utterance-level aggregation, and hyperbolic fusion; the multimodal projector $\operatorname{MMP}(\cdot;\theta_{\text{proj}})$ maps them to $\mathbf{H}_{\text{audio}} \in \mathbb{R}^{K \times d_t}$ for input to the frozen LM:
\begin{equation}
  \hat{\mathbf{y}} = \operatorname{LLM}\bigl([\mathbf{H}_{\text{audio}};\, \mathbf{T}];\, \theta_{\text{LLM}}\bigr).
  \label{eq:llm}
\end{equation}

\begin{figure}[t]
  \centering
  \includegraphics[width=\columnwidth]{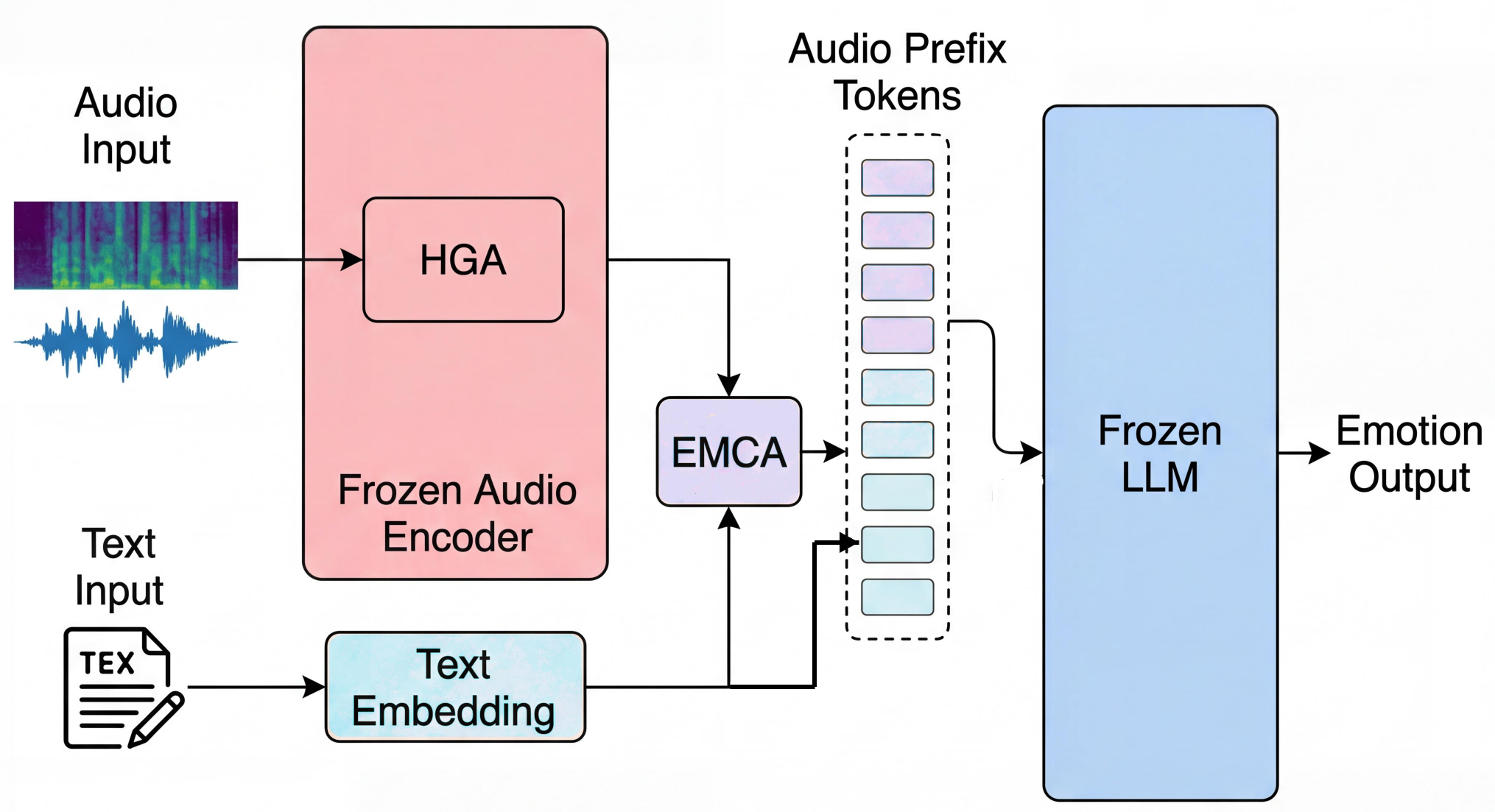}
  \caption{Method details of \method. Left: HGA maps frozen Q/V weight rows into the Poincar\'{e} ball via $\exp_{\mathbf{0}}^{D,c}$, applies diagonal M\"{o}bius scaling $\mathbf{W}_s^{(\ell)}\otimes_c(\cdot)$ parameterized by $\mathbf{s}^{(\ell)}$, and recovers adapted weights via $\log_{\mathbf{0}}^{D,c}$, achieving layer-adaptive hyperbolic radius modulation. Right: EMCA compresses the task-conditioned audio summary through three capacity branches ($d_a/8$, $d_a/16$, $d_a/32$), fuses them via the Einstein midpoint operator, and the prefix generator produces $K$ audio prefix tokens.}
  \label{fig:flowchart}
\end{figure}

\subsection{Hyperbolic Geometric Adapter (HGA)}
\label{sec:hga}

HGA operates in the weight space of the audio encoder, achieving geometrically persistent radius modulation without per-sample inference overhead. For the pre-trained weight $\mathbf{W}_0^{(\ell)} \in \mathbb{R}^{d_a \times d_a}$ of layer $\ell$, let $\mathbf{w}_{0,i}^{(\ell)} \in \mathbb{R}^{d_a}$ denote its $i$-th row. Its row-wise hyperbolic embedding is $\mathbf{w}_{D,0,i}^{(\ell)} = \exp_{\mathbf{0}}^{D,c}(\mathbf{w}_{0,i}^{(\ell)}) \in \mathcal{D}_c^{d_a}$. HGA modulates these row embeddings by adjusting their hyperbolic radii.

\subsubsection{Hyperbolic Radius Modulation: From Scalar Radius Scaling to Diagonal Radius Modulation}

The starting point of HGA is a row-wise scalar radius scaling constraint on the hyperbolic embeddings of the weight rows:
\begin{equation}
  \operatorname{Rad}_{\mathbf{w}_{D,i}^{(\ell)}} = s^{(\ell)} \cdot \operatorname{Rad}_{\mathbf{w}_{D,0,i}^{(\ell)}}, \quad \forall\, i,
  \label{eq:radius_constraint}
\end{equation}
where the task-adaptive scaling coefficient $s^{(\ell)} > 0$ determines the direction and magnitude of radial displacement for layer $\ell$ weights in hyperbolic space.

\begin{theorem}[Radius scaling equivalence of M\"{o}bius scalar multiplication]
\label{thm:radius_scaling}
For any point $\mathbf{X}$ in $\mathcal{D}_c^d$ and scalar $s > 0$, scalar M\"{o}bius multiplication satisfies:
\begin{equation}
  s \cdot \operatorname{Rad}_{\mathbf{X}} = \operatorname{Rad}_{s \otimes_c \mathbf{X}}.
  \label{eq:mobius_scalar}
\end{equation}
\end{theorem}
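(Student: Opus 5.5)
The plan is to make the M\"{o}bius scalar multiplication explicit and then compute the radius directly. The paper defines $s \otimes_c \mathbf{X}$ through the origin maps, namely $s \otimes_c \mathbf{X} = \exp_{\mathbf{0}}^{D,c}\bigl(s \log_{\mathbf{0}}^{D,c}(\mathbf{X})\bigr)$. I will use the standard closed forms
\begin{equation*}
  \exp_{\mathbf{0}}^{D,c}(\mathbf{v}) = \tanh(\sqrt{c}\lVert\mathbf{v}\rVert)\frac{\mathbf{v}}{\sqrt{c}\lVert\mathbf{v}\rVert},
  \qquad
  \log_{\mathbf{0}}^{D,c}(\mathbf{y}) = \operatorname{artanh}(\sqrt{c}\lVert\mathbf{y}\rVert)\frac{\mathbf{y}}{\sqrt{c}\lVert\mathbf{y}\rVert}.
\end{equation*}
Composing them gives the familiar expression
\begin{equation*}
  s \otimes_c \mathbf{X} = \frac{1}{\sqrt{c}}\tanh\bigl(s\,\operatorname{artanh}(\sqrt{c}\lVert\mathbf{X}\rVert)\bigr)\frac{\mathbf{X}}{\lVert\mathbf{X}\rVert}.
\end{equation*}

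First I dispose of the degenerate case $\mathbf{X}=\mathbf{0}$. There $s\otimes_c\mathbf{0}=\mathbf{0}$ by convention, or by continuity, so both sides of Eq.~\eqref{eq:mobius_scalar} are zero.

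For $\mathbf{X}\neq\mathbf{0}$, set $a=\operatorname{artanh}(\sqrt{c}\lVert\mathbf{X}\rVert)$. This is finite and positive because $\mathbf{X}$ lies in the open ball, so $\sqrt{c}\lVert\mathbf{X}\rVert<1$. The norm of the product is
\begin{equation*}
  \lVert s\otimes_c\mathbf{X}\rVert = \frac{1}{\sqrt{c}}\tanh(sa).
\end{equation*}
Since $s>0$, we have $\tanh(sa)\in(0,1)$. This shows two things: the image stays in $\mathcal{D}_c^d$, and $\operatorname{artanh}$ inverts $\tanh$ on this value. Substituting into Eq.~\eqref{eq:hyp_radius} then gives
\begin{equation*}
  \operatorname{Rad}_{s\otimes_c\mathbf{X}} = \frac{2}{\sqrt{c}}\operatorname{artanh}\bigl(\tanh(sa)\bigr) = \frac{2}{\sqrt{c}}\,sa = s\operatorname{Rad}_{\mathbf{X}},
\end{equation*}
which is the claim.

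An equivalent view is geometric. $\log_{\mathbf{0}}^{D,c}$ is a radial isometry along geodesics through the origin, with $d_c(\mathbf{X},\mathbf{0}) = 2\lVert\log_{\mathbf{0}}^{D,c}(\mathbf{X})\rVert$. Scaling the tangent vector by $s>0$ therefore scales the geodesic distance by $s$.

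The main obstacle is bookkeeping rather than substance: the factor of $2$. With the $\exp/\log$ normalisation above, $d_c(\mathbf{0},\mathbf{X}) = 2\lVert\log_{\mathbf{0}}^{D,c}\mathbf{X}\rVert$. The constant cancels on both sides, so the argument is insensitive to which convention is used. I will also note explicitly that $s>0$ is needed for two reasons: the norm identity $\lVert s\mathbf{v}\rVert = s\lVert\mathbf{v}\rVert$ requires it, and it keeps the direction $\mathbf{X}/\lVert\mathbf{X}\rVert$ unchanged.
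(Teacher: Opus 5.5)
Your proof is correct and is essentially the paper's argument. The paper packages the same computation as a Tangent--Radius Identity, namely $\operatorname{Rad}_{\exp_{\mathbf{0}}^{D,c}(\mathbf{v})}=\lVert\mathbf{v}\rVert$ together with $\lVert\log_{\mathbf{0}}^{D,c}(\mathbf{x})\rVert=\operatorname{Rad}_{\mathbf{x}}$, and then scales the tangent vector by $s$ (this is your ``geometric view''), so both arguments rest on $\operatorname{artanh}(\tanh t)=t$. The differences are cosmetic: the paper normalizes $\exp_{\mathbf{0}}^{D,c}(\mathbf{v})=\tanh\bigl(\tfrac{\sqrt{c}}{2}\lVert\mathbf{v}\rVert\bigr)\tfrac{\mathbf{v}}{\sqrt{c}\lVert\mathbf{v}\rVert}$ so that no factor of $2$ appears, whereas you use the Ganea et al.\ convention; this is harmless, as you note, because rescaling the tangent chart by a constant leaves $\exp_{\mathbf{0}}^{D,c}(s\log_{\mathbf{0}}^{D,c}(\cdot))$ unchanged. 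Your separate treatment of $\mathbf{X}=\mathbf{0}$ also covers a case the paper leaves implicit.
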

By Theorem~\ref{thm:radius_scaling}, the scalar constraint in Equation~(\ref{eq:radius_constraint}) is realized by row-wise scalar M\"{o}bius multiplication, allowing the Euclidean scalar $s^{(\ell)}$ to be optimized via standard AdamW while strictly realizing the intended radial displacement. However, a single scalar $s^{(\ell)}$ applies uniform radial scaling to all rows and all feature dimensions, which is too rigid for adaptation across heterogeneous encoder channels. Different layers of a frozen audio encoder capture emotion cues at different hierarchical levels: shallow layers encode acoustic templates such as pitch contour and energy envelope, while deep layers distill emotion-discriminative semantics. Forcing a uniform adaptation across this hierarchy is therefore sub-optimal. Our solution is to generalize the scalar rule to a diagonal modulation matrix $\mathbf{W}_s^{(\ell)} = \operatorname{diag}(\mathbf{s}^{(\ell)})$ with $\mathbf{s}^{(\ell)} \in \mathbb{R}_{>0}^{d_a}$, which acts as a layer-adaptive granularity controller: by learning per-layer, per-dimension scaling factors, HGA adjusts the radial position of representations layer by layer, explicitly aligning each layer's adaptation with the encoder's intrinsic hierarchy. The resulting Euclidean reparameterization of HGA is:
\begin{equation}
  \mathbf{W}_{\text{HGA}}^{(\ell)} := \left[\log_{\mathbf{0}}^{D,c}\!\Bigl(\mathbf{W}_s^{(\ell)} \otimes_c \exp_{\mathbf{0}}^{D,c}\!\bigl(\mathbf{w}_{0,i}^{(\ell)}\bigr)\Bigr)\right]_{i=1}^{d_a},
  \label{eq:hga}
\end{equation}
where $\mathbf{W}_s^{(\ell)} = \operatorname{diag}(\mathbf{s}^{(\ell)})$ is a learnable diagonal scaling matrix, introducing $2d_a$ parameters per layer for the Q/V projections.

\begin{proposition}[Row-wise radius decomposition of diagonal M\"{o}bius scaling]
\label{prop:radius_decomp}
Let $\mathbf{W}_s^{(\ell)} = \operatorname{diag}(\mathbf{s}^{(\ell)})$ with $\mathbf{s}^{(\ell)} \in \mathbb{R}_{>0}^{d_a}$, and let $\mathbf{w}_i$ denote the $i$-th row of $\mathbf{W}_0^{(\ell)}$ (equivalently, $\log_{\mathbf{0}}^{D,c}(\mathbf{w}_{D,0,i}^{(\ell)})$). Then:
\begin{equation}
  \operatorname{Rad}_{\mathbf{w}_{D,i}^{(\ell)}} = s_{\text{eff},i}^{(\ell)} \cdot \operatorname{Rad}_{\mathbf{w}_{D,0,i}^{(\ell)}}, \quad s_{\text{eff},i}^{(\ell)} := \frac{\lVert \mathbf{w}_i \odot \mathbf{s}^{(\ell)} \rVert}{\lVert \mathbf{w}_i \rVert}.
  \label{eq:prop1}
\end{equation}
\end{proposition}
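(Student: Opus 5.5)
The plan is to unwind the definition of diagonal M\"{o}bius multiplication through the exponential and logarithmic maps at the origin, and then compute hyperbolic radii directly from Equation~(\ref{eq:hyp_radius}). We use the standard formulas
\[
\exp_{\mathbf{0}}^{D,c}(\mathbf{v}) = \tanh(\sqrt{c}\lVert\mathbf{v}\rVert)\frac{\mathbf{v}}{\sqrt{c}\lVert\mathbf{v}\rVert},
\qquad
\log_{\mathbf{0}}^{D,c}(\mathbf{y}) = \operatorname{artanh}(\sqrt{c}\lVert\mathbf{y}\rVert)\frac{\mathbf{y}}{\sqrt{c}\lVert\mathbf{y}\rVert},
\]
and the definition $\mathbf{D}\otimes_c\mathbf{x} := \exp_{\mathbf{0}}^{D,c}(\mathbf{D}\,\log_{\mathbf{0}}^{D,c}(\mathbf{x}))$, which is the convention the paper adopts when it says these operations are "defined through these maps."

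\textbf{Key identity.} Combining the two formulas gives $\operatorname{Rad}_{\exp_{\mathbf{0}}^{D,c}(\mathbf{v})} = \frac{2}{\sqrt{c}}\sqrt{c}\lVert\mathbf{v}\rVert = 2\lVert\mathbf{v}\rVert$. So the hyperbolic radius of an exponentiated tangent vector equals twice its Euclidean norm, because $\operatorname{artanh}$ and $\tanh$ cancel. This is the same mechanism that underlies Theorem~\ref{thm:radius_scaling}, and we may cite it there.

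\textbf{Steps.}
\begin{enumerate}
\item Since $\log_{\mathbf{0}}^{D,c}$ and $\exp_{\mathbf{0}}^{D,c}$ are mutually inverse on $\mathcal{D}_c^{d_a}$, we have $\log_{\mathbf{0}}^{D,c}(\mathbf{w}_{D,0,i}^{(\ell)}) = \mathbf{w}_i$, as the statement already notes.
\item Hence the modulated row embedding is
\[
\mathbf{w}_{D,i}^{(\ell)} = \mathbf{W}_s^{(\ell)}\otimes_c\mathbf{w}_{D,0,i}^{(\ell)} = \exp_{\mathbf{0}}^{D,c}\bigl(\operatorname{diag}(\mathbf{s}^{(\ell)})\,\mathbf{w}_i\bigr) = \exp_{\mathbf{0}}^{D,c}\bigl(\mathbf{w}_i\odot\mathbf{s}^{(\ell)}\bigr).
\]
\item Applying the key identity twice gives $\operatorname{Rad}_{\mathbf{w}_{D,i}^{(\ell)}} = 2\lVert\mathbf{w}_i\odot\mathbf{s}^{(\ell)}\rVert$ and $\operatorname{Rad}_{\mathbf{w}_{D,0,i}^{(\ell)}} = 2\lVert\mathbf{w}_i\rVert$. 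Taking the ratio yields $s_{\text{eff},i}^{(\ell)}$.
\end{enumerate}

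\textbf{Degenerate case.} A zero row has $\mathbf{w}_i=\mathbf{0}$. Both radii then vanish, and the identity holds for any choice of $s_{\text{eff},i}^{(\ell)}$. We would state this case separately or exclude it, since the ratio defining $s_{\text{eff},i}^{(\ell)}$ is otherwise undefined.

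\textbf{Main obstacle.} The difficulty is conventional rather than computational. We must confirm that the diagonal M\"{o}bius product is meant as $\exp\circ\mathbf{D}\circ\log$, and not, for instance, as a M\"{o}bius matrix-vector product with a different normalization. We must also handle the $\exp/\log$ formulas at $\mathbf{v}=\mathbf{0}$ by continuity. As a consistency check, when $\mathbf{s}^{(\ell)}=s\mathbf{1}$ the result reduces to $s_{\text{eff},i}=s$, recovering Theorem~\ref{thm:radius_scaling}.
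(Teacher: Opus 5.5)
Your proof is correct and follows the paper's own argument: the paper states your key identity as a separate Tangent--Radius Lemma, writes $\mathbf{w}_{D,i}^{(\ell)}=\exp_{\mathbf{0}}^{D,c}(\mathbf{s}^{(\ell)}\odot\mathbf{w}_i)$, applies the lemma to both rows, and divides. The one difference is normalization: the paper's appendix uses $\exp_{\mathbf{0}}^{D,c}(\mathbf{v})=\tanh(\tfrac{\sqrt{c}}{2}\lVert\mathbf{v}\rVert)\tfrac{\mathbf{v}}{\sqrt{c}\lVert\mathbf{v}\rVert}$, so its lemma gives $\operatorname{Rad}_{\exp_{\mathbf{0}}^{D,c}(\mathbf{v})}=\lVert\mathbf{v}\rVert$, whereas your conformal-factor convention $\lambda_{\mathbf{0}}=2$ gives $2\lVert\mathbf{v}\rVert$; the factor of $2$ cancels in the ratio, and your separate handling of a zero row covers a case the paper leaves implicit.
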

Proposition~\ref{prop:radius_decomp} shows that the diagonal scaling vector $\mathbf{s}^{(\ell)}$ induces row-wise effective radius modulation in the hyperbolic embeddings of the weight rows. The modulation is not an independently specified row-wise target; rather, each effective scaling factor $s_{\text{eff},i}^{(\ell)}$ is jointly determined by the diagonal vector and the original weight distribution of that row.

A key geometric insight justifies the diagonal parameterization: granularity control requires radial scaling, not rotation. Scaling moves weight rows along the radial dimension in the Poincar\'{e} ball (changing their granularity level) while preserving their relative angular structure (preserving the semantic relationships encoded in the pre-trained weights). Rotation, by contrast, would alter what the representations attend to, an undesirable side effect. Since only scaling is needed, a diagonal matrix ($O(d_a)$ parameters) is the geometrically sufficient parameterization, not a sparse approximation of a fuller matrix.

Because $\mathbf{W}_{\text{HGA}}^{(\ell)}$ depends on $\mathbf{s}^{(\ell)}$ through row-wise hyperbolic modulation (Proposition~\ref{prop:radius_decomp}), varying $\mathbf{s}^{(\ell)}$ changes the Euclidean Q/V projections seen by each encoder layer, inducing a layer-dependent bias on token representations toward different granularity regimes.

In implementation, the only trainable parameters are diagonal vectors $\mathbf{s}^{(\ell)} \in \mathbb{R}_{>0}^{d_a}$ (parameterized via $\operatorname{softplus}$ to ensure positivity). During each forward pass, $\mathbf{W}_{\text{HGA}}^{(\ell)}$ is computed dynamically from $\mathbf{s}^{(\ell)}$ and frozen $\mathbf{W}_0^{(\ell)}$ via Equation~(\ref{eq:hga}), then applied to input $\mathbf{X}^{(\ell)}$ via standard Euclidean matrix multiplication. Since $\mathbf{W}_s^{(\ell)} = \operatorname{diag}(\mathbf{s}^{(\ell)})$ is diagonal, the computation reduces to element-wise scaling of each row of $\log_{\mathbf{0}}^{D,c}(\mathbf{W}_0^{(\ell)})$ by $\mathbf{s}^{(\ell)}$, followed by mapping back to Euclidean space. This is efficiently implemented as several element-wise operations and $\tanh/\operatorname{artanh}$ calls, without additional large matrix storage overhead. Inputs and outputs of the entire forward pass are standard Euclidean tensors; hyperbolic geometry manifests in the parameterization of the weights: each layer's scaling vector $\mathbf{s}^{(\ell)}$ determines the row-wise radial displacement in the Poincar\'{e} ball, as characterized by Proposition~\ref{prop:radius_decomp}.

\subsubsection{HGA Injection into the Audio Encoder}

\method injects HGA only into the query and value projections of each self-attention layer, replacing the frozen weights $\mathbf{W}_0^{Q/V,(\ell)}$ with hyperbolic-modulated $\mathbf{W}_{\text{HGA}}^{Q/V,(\ell)}$ (Equation~(\ref{eq:hga})) while leaving the key projection $\mathbf{W}_0^{K,(\ell)}$ unchanged. The asymmetric design is deliberate: K determines where the model attends (the pre-trained attention anchor), while Q and V determine at what granularity the model queries and extracts information. Freezing K preserves the encoder's original attention patterns; modulating Q/V shifts the representational granularity of what is queried and extracted, aligning with the goal of granularity-aware adaptation without altering the encoder's attention structure.
Each encoder layer additionally carries a lightweight bottleneck residual adapter as a Euclidean residual branch, providing local nonlinear adaptation that complements HGA's hyperbolic weight-space modulation.

\subsubsection{Parameter Efficiency Analysis}

Under the diagonal parameterization, the Q/V projections of HGA at layer $\ell$ introduce $2d_a$ diagonal scaling parameters. Combined with a residual adapter of bottleneck dimension $r$ (up/down projections each of size $d_a \times r$), the parameter count per layer is $2d_a + 2d_a r = 2d_a(1+r)$. For Qwen2-Audio with 32 layers, hidden dimension 1280, and $r=12$, HGA across all layers requires approximately 1.06M trainable parameters, only 0.013\% of the full Qwen2-Audio-7B parameter count. Adding EMCA (approximately 5.35M) and the base model's native multimodal projector, the total trainable parameters of \method amount to approximately 0.12\% of Qwen2-Audio-7B. All other parameter efficiency figures reported in this paper refer to the full framework, not HGA alone.

\subsection{Emotion-aware Multi-capacity Cross-modal Aggregator (EMCA)}
\label{sec:emca}

EMCA is a task-level lightweight aggregator that compresses the utterance-level hidden states produced by HGA into audio prefixes for the frozen LLM. Unlike HGA, which adapts the encoder weights in hyperbolic space, EMCA operates as an utterance-level semantic readout module. Its role is to summarize frame representations, apply task-conditioned gating, and fuse multiple bottleneck branches before prefix construction.

\subsubsection{Overall EMCA Pipeline}

Given audio hidden states $\mathbf{A} \in \mathbb{R}^{T \times d_a}$ and the instruction prompt embeddings $\mathbf{T} \in \mathbb{R}^{N \times d_t}$ (token embeddings of the fixed emotion classification prompt), EMCA produces audio prefixes via:
\begin{equation}
  \mathbf{Z}_{\text{audio}} =
  \Pi\!\left(
  \operatorname{HypFuse}
  \bigl(
  \phi_1(\tilde{\mathbf{a}}),
  \phi_2(\tilde{\mathbf{a}}),
  \phi_3(\tilde{\mathbf{a}})
  \bigr)
  \right).
  \label{eq:emca}
\end{equation}
where $\operatorname{HypFuse}(\cdot)$ denotes the three-branch hyperbolic fusion (Section~\ref{sec:hgfuse}). The three branches share a common summary vector and extract complementary emotion representations at different bottleneck capacities.

\subsubsection{Task-Conditioned Gating}

Since \method operates on pure audio without ASR transcription, the instruction prompt serves not as text semantics but as a task-conditioned gate. Speech signals contain substantial emotion-irrelevant acoustic detail (channel noise, speaker identity artifacts, and recording conditions); element-wise gating between the pooled prompt embedding $\bar{\mathbf{t}}$ and the audio summary activates channels relevant to the queried emotion categories while suppressing task-irrelevant noise. We implement this via Hadamard multiplication, introducing only one $d_a \times d_t$ projection matrix and one $d_a$ bias vector-about one-third the parameter cost of cross-attention. EMCA first applies global average pooling to audio and prompt sequences to obtain summary vectors, then produces the task-conditioned audio summary $\tilde{\mathbf{a}}$:
\begin{align}
  \bar{\mathbf{a}} &= \frac{1}{T}\sum_{t=1}^{T}\mathbf{A}_t,
  \qquad
  \bar{\mathbf{t}} = \frac{1}{N}\sum_{n=1}^{N}\mathbf{T}_n, \notag\\
  \tilde{\mathbf{a}}
  &=
  \bar{\mathbf{a}} \odot (\mathbf{W}_{\text{text}}\bar{\mathbf{t}} + \mathbf{b}_{\text{text}}).
  \label{eq:modulation}
\end{align}

Temporal structure is modeled primarily within the audio encoder, where HGA modulates weights across layers and self-attention captures frame-level dependencies throughout the $L$ layers; EMCA complements this by aggregating the resulting frame representations into a compact utterance-level summary, with semantic abstraction diversity provided by its three bottleneck branches (Section~\ref{sec:hgfuse}).

\subsubsection{Multi-Capacity Hyperbolic Integration}
\label{sec:hgfuse}

Three MLP branches with bottleneck ratios $1/8$, $1/16$, and $1/32$ extract features at decreasing capacity levels from $\tilde{\mathbf{a}}$ in parallel, then map them to the Poincar\'{e} ball and fuse them via the Einstein midpoint operator, which better preserves radial distinctions among branches than Euclidean weighted averaging. The branch weights $\boldsymbol{\alpha} = \operatorname{softmax}(\mathbf{W}_{\text{route}}\tilde{\mathbf{a}}) \in \Delta^2$ are input-adaptively computed from the task-conditioned summary $\tilde{\mathbf{a}}$ via a learnable routing matrix $\mathbf{W}_{\text{route}} \in \mathbb{R}^{3 \times d_a}$, endowing the fusion with dynamic branch selection rather than fixed importance weights:
\begin{align}
  \mathbf{h}^{(i)} &= \phi_i(\tilde{\mathbf{a}}),\ i \in \{1,2,3\},\quad \boldsymbol{\alpha} = \operatorname{softmax}(\mathbf{W}_{\text{route}}\tilde{\mathbf{a}}), \notag\\
  \mathbf{p}_{\text{fuse}}
  &= \operatorname{EinMid}\!\bigl(
  \exp_{\mathbf{0}}^{D,c}(\mathbf{h}^{(1)}),
  \exp_{\mathbf{0}}^{D,c}(\mathbf{h}^{(2)}),
  \exp_{\mathbf{0}}^{D,c}(\mathbf{h}^{(3)});\,
  \boldsymbol{\alpha}, c
  \bigr) \in \mathcal{D}_c^{d_h}, \notag\\
  \mathbf{u}
  &= \log_{\mathbf{0}}^{D,c}(\mathbf{p}_{\text{fuse}}).
  \label{eq:hgfuse}
\end{align}
$\operatorname{EinMid}$ is computed via the standard Klein-to-Poincar\'{e} back-projection~\cite{ganea2018hnn}.

Branches capturing more discriminative emotion features lie closer to the Poincar\'{e} ball boundary; the Einstein midpoint assigns higher implicit weight to these boundary-proximal points via the Lorentz factor $\gamma_i = (1 - c\lVert\mathbf{k}_i\rVert^2)^{-1/2}$ (where $\mathbf{k}_i$ is the Klein image of $\mathbf{p}_i$): as a point approaches the boundary, its Lorentz factor grows, increasing its contribution to the midpoint. This geometry-native weighting mechanism automatically prioritizes fine-grained emotional semantics during fusion, a property absent in flat Euclidean averaging, which weights branches purely by the learned scalar $\boldsymbol{\alpha}$ without geometric amplification. Ablation rows G1 and G3 in Table~\ref{tab:ablation} confirm that replacing hyperbolic fusion with Euclidean averaging degrades performance by up to $4.08$~pp WA.

\subsubsection{Audio Prefix Construction and Parameter Efficiency}

The fusion vector $\mathbf{u}$ is back-projected and expanded by prefix generator $\Pi$ into $K$ audio prefix tokens, then projected by the multimodal projector $\operatorname{MMP}$ to the LLM input dimension:
\begin{align}
  \mathbf{u}_{\text{audio}} &= \mathbf{W}_{\text{proj}}\mathbf{u} + \mathbf{b}_{\text{proj}}, \notag\\
  \mathbf{Z}_{\text{audio}} &= \Pi(\mathbf{u}_{\text{audio}}), \notag\\
  \mathbf{H}_{\text{audio}} &= \operatorname{MMP}(\mathbf{Z}_{\text{audio}}; \theta_{\text{proj}}).
  \label{eq:prefix}
\end{align}
The prefix generator is defined as $\Pi(\mathbf{u}_{\text{audio}}) = \mathbf{Q}_{\text{pfx}} + \mathbf{1}_K \mathbf{u}_{\text{audio}}^{\top} \in \mathbb{R}^{K \times d_a}$, where $\mathbf{Q}_{\text{pfx}} \in \mathbb{R}^{K \times d_a}$ is a learnable prefix query matrix, $\mathbf{1}_K \in \mathbb{R}^K$ is the all-ones vector, and $\mathbf{u}_{\text{audio}} \in \mathbb{R}^{d_a}$ is the back-projected fusion vector.

\subsection{Training Objective}
\label{sec:training}

\method jointly optimizes $\theta_{\text{HGA}}$, $\theta_{\text{EMCA}}$, and $\theta_{\text{proj}}$ via a composite loss:
\begin{equation}
  \mathcal{L} = \mathcal{L}_{\text{CE}} + \lambda_{\text{hyp}}\,\mathcal{L}_{\text{hyp}} + \lambda_{\text{radius}}\,\mathcal{L}_{\text{radius}},
  \label{eq:total_loss}
\end{equation}
where $\mathcal{L}_{\text{CE}}$ is the standard label cross-entropy loss. $\mathcal{L}_{\text{hyp}}$ applies cross-entropy on geodesic distances from the EMCA fusion point $\mathbf{p}_{\text{fuse}} \in \mathcal{D}_c^{d_h}$ (Section~\ref{sec:hgfuse}) to learnable class prototypes. Each prototype $\mathbf{v}_k \in \mathbb{R}^{d_h}$ is parameterized in the tangent space at the origin and projected onto the Poincar\'{e} ball during the forward pass, ensuring numerical stability under standard AdamW optimization:
\begin{equation}
  \mathcal{L}_{\text{hyp}} = \operatorname{CE}\!\left(\left\{-\tfrac{1}{\tau}\,d_c\!\left(\mathbf{p}_{\text{fuse}},\,\exp_{\mathbf{0}}^{D,c}(\mathbf{v}_k)\right)\right\}_{k},\, y\right).
  \label{eq:loss_hyp}
\end{equation}
Since $\exp_{\mathbf{0}}^{D,c}$ maps any finite $\mathbf{v}_k$ strictly inside the ball, and $\mathbf{p}_{\text{fuse}}$ (as an Einstein midpoint of exp-mapped branch outputs) also lies strictly inside, pairwise geodesic distances remain bounded and temperature-scaled logits ($\tau = 0.07$) stay numerically stable.

$\mathcal{L}_{\text{radius}}$ enforces a monotonic radius ordering across the three EMCA branches (high $\to$ low capacity) via a hyperbolic margin ranking formulation:
\begin{equation}
  \mathcal{L}_{\text{radius}} = \sum_{i=1}^{2}\max\!\bigl(0,\; m - (\bar{r}_i - \bar{r}_{i+1})\bigr), \quad \bar{r}_i = \frac{1}{B}\sum_{b=1}^{B} \operatorname{Rad}\!\bigl(\exp_{\mathbf{0}}^{D,c}(\mathbf{h}_b^{(i)})\bigr),
  \label{eq:loss_radius}
\end{equation}
where $\mathbf{h}_b^{(i)} = \phi_i(\tilde{\mathbf{a}}_b) \in \mathbb{R}^{d_h}$ is the tangent-space MLP output of branch $i$ for sample $b$.

$\mathcal{L}_{\text{radius}}$ enforces structural ordering among the three EMCA branches, ensuring that higher-capacity branches occupy larger radii than lower-capacity ones. $\mathcal{L}_{\text{hyp}}$ anchors this ordering to emotion semantics by pulling $\mathbf{p}_{\text{fuse}}$ toward the correct class prototype. Combining both losses ensures that radial structure is aligned with emotion discriminability; either loss alone is insufficient, as confirmed by ablation rows L1 and L2 in Table~\ref{tab:ablation}.

\section{Experiments}

\subsection{Experimental Setup}
\label{sec:setup}

\paragraph{Datasets.}
Main experiments are conducted on two standard SER benchmarks: IEMOCAP~\cite{iemocap2008} (4-class, LOSO cross-validation) and MELD~\cite{meld2019} (7-class, standard train/dev/test split). The cross-dataset generalization experiment (Section~\ref{sec:cross_dataset}) transfers MELD-trained models to IEMOCAP, RAVDESS~\cite{ravdess2018}, and SAVEE~\cite{savee2011} in a zero-shot manner, and compares \method against the Qwen2-Audio zero-shot baseline under the same WA metric.

\paragraph{Baselines.}
We compare against baselines at three levels, following~\cite{li2025emorl}. This single-backbone, MELD+IEMOCAP evaluation protocol is the prevailing paradigm in LALM-based SER, adopted by recent works including EMO-RL~\cite{li2025emorl} and EmoAgent-R1~\cite{fang2026emoagent}, ensuring cross-study comparability. Non-LALM methods include HuBERT large~\cite{hsu2021}, WavLM large~\cite{chen2022wavlm}, data2vec 2.0 large~\cite{baevski2023efficient}, Whisper large V3~\cite{radford2023robust}, and Emotion2vec+ large~\cite{ma2023emotion2vec}, all fine-tuned with classification heads; results are taken from EmoBox~\cite{ma2024emobox}. LALM zero-shot baselines use Qwen2-Audio with direct inference and chain-of-thought prompting. Euclidean PEFT baselines use LoRA~\cite{lora2022} and Adapter~\cite{adapters2019} on the same Qwen2-Audio backbone with identical training configurations as \method, each with approximately 10.49M trainable parameters. The full-parameter SFT+IR reference from EMO-RL~\cite{li2025emorl} is listed in Table~\ref{tab:main} for scale only, given its different training regime.

\paragraph{Implementation Details.}
The backbone is Qwen2-Audio-7B-Instruct. All modules are trained in BFloat16 precision on 4 NVIDIA RTX 4090 24G GPUs. The optimizer is AdamW with learning rate $1.125\times10^{-3}$ and batch size 32 (via gradient accumulation). All experiments are trained for 30 epochs with an early stopping strategy of patience = 5, using validation WA to select the best checkpoint. Default hyperparameters: curvature $c = 0.01$, HGA bottleneck dimension $r = 12$, EMCA hidden dimension $d_h = 256$, number of audio prefix tokens $K = 4$, auxiliary loss weights $\lambda_{\text{hyp}} = 0.1$ and $\lambda_{\text{radius}} = 0.01$, radius ranking margin $m = 0.5$, hyperbolic prototype loss temperature $\tau = 0.07$. The emotion classification instruction prompt $\mathbf{x}_t$ follows a fixed template listing the target emotion categories for each dataset (e.g., ``\textit{Identify the emotion from: neutral, surprise, fear, sadness, joy, disgust, anger}'' for MELD); this prompt is shared across all training samples of a given dataset.

\paragraph{Evaluation Metrics.}
Following domain conventions~\cite{li2025emorl}, three metrics are reported: Weighted Accuracy (WA), which measures overall classification accuracy; Unweighted Accuracy (UA), which computes the equal-weight average over all classes and better reflects minority class performance on imbalanced datasets; and macro-averaged F1 (F1), which provides a comprehensive evaluation insensitive to class distribution.

\subsection{Main Results}

\begin{table*}[t]
  \centering
  \fontsize{8.2pt}{10.8pt}\selectfont
  \setlength{\tabcolsep}{6.5pt}
  \caption{\textbf{Comparison on MELD (7-class) and IEMOCAP (4-class) (\%).} Non-LALM results from EmoBox~\cite{ma2024emobox}. LALM+PEFT methods share identical training configurations. $\dagger$: full-parameter reference from~\cite{li2025emorl}. \textbf{Bold}: best PEFT.}
  \label{tab:main}
  \begin{tabular}{llccccccc}
    \toprule
    \multirow{2}{*}{\textbf{Model Type}} & \multirow{2}{*}{\textbf{Method}} & \multicolumn{3}{c}{\textbf{MELD}} & \multicolumn{3}{c}{\textbf{IEMOCAP}} \\
    \cmidrule(lr){3-5} \cmidrule(lr){6-8}
    & & UA & WA & F1 & UA & WA & F1 \\
    \midrule
    \rowcolor{rowgray}
    \multicolumn{8}{c}{\textbf{\textit{Non-LALM Methods}}} \\
    Non-LALM & HuBERT large            & 24.13 & 46.37 & 24.99 & 67.42 & 66.69 & 67.24 \\
    Non-LALM & WavLM large             & 28.18 & 49.31 & 29.11 & 69.47 & 69.07 & 69.29 \\
    Non-LALM & data2vec 2.0 large      & 26.33 & 47.72 & 27.35 & 57.30 & 56.23 & 56.70 \\
    Non-LALM & Whisper large V3        & 31.54 & 51.89 & 32.95 & 73.54 & 72.86 & 73.11 \\
    Non-LALM & Emotion2vec+ large      & 28.03 & 51.88 & --    & 70.70 & 67.30 & --    \\
    \midrule
    \rowcolor{rowgray}
    \multicolumn{8}{c}{\textbf{\textit{LALM Methods}}} \\
    LALM     & Qwen2-Audio (Direct Inference) & 18.96 & 39.83 & 19.84 & 53.76 & 51.52 & 47.68 \\
    LALM     & Qwen2-Audio (CoT Inference)    & 26.89 & 50.57 & 28.05 & 64.33 & 60.37 & 61.61 \\
    LALM-FT$\dagger$ & SFT + IR~\cite{li2025emorl} & 33.26 & 57.39 & 35.77 & 85.70 & 83.87 & 84.53 \\
    \midrule
    \rowcolor{rowgray}
    \multicolumn{8}{c}{\textbf{\textit{LALM + PEFT Methods}}} \\
    LALM+PEFT & Adapter                & 45.07 & 62.09 & 47.72 & 71.72 & 74.59 & 72.85 \\
    LALM+PEFT & LoRA                   & 46.88 & 65.43 & 47.84 & 78.90 & \best{80.60} & \best{79.28} \\
    \rowcolor{rowhighlight}
    LALM+PEFT & \method (ours)         & \best{50.59} & \best{68.97} & \best{53.32} & \best{82.13} & 79.08 & 78.38 \\
    \bottomrule
  \end{tabular}
\end{table*}

Under a 0.12\% parameter budget, \method outperforms both Euclidean PEFT baselines on MELD across all three metrics, with the largest gain on F1 (+5.48~pp over LoRA). On IEMOCAP, an instructive trade-off emerges: WA is marginally below LoRA ($-$1.52~pp) while UA rises by +3.23~pp, indicating that hyperbolic geometry redistributes representational capacity away from majority classes toward minority ones. This is consistent with the geometric prior: hyperbolic space allocates exponentially more volume at larger radii, naturally counteracting the feature compression that Euclidean PEFT imposes on underrepresented classes.

\subsection{Ablation Study}
\label{sec:ablation}

All ablation experiments report WA/UA/F1 on the MELD test set. Table~\ref{tab:ablation} decomposes component contributions along the module, loss, and geometry dimensions.

\begin{table}[t]
  \centering
  \fontsize{7.8pt}{9.2pt}\selectfont
  \setlength{\tabcolsep}{3.5pt}
  \caption{\textbf{Ablation study on MELD (\%).} \textit{Top}: M1 and G2 isolate hyperbolic vs.\ Euclidean HGA under identical conditions (no EMCA, $\mathcal{L}_{\text{CE}}$ only). \textit{Middle}: loss ablations on the full HGA+EMCA architecture. \textit{Bottom}: architecture-matched Euclidean counterparts ($\mathcal{L}_{\text{CE}}$ only; see text for why auxiliary losses are inapplicable). \best{Bold}: best.}
  \label{tab:ablation}
  \begin{tabular}{lccc}
    \toprule
    \textbf{Configuration} & \textbf{WA} & \textbf{UA} & \textbf{F1} \\
    \midrule
    \rowcolor{rowgray}
    \multicolumn{4}{c}{\textbf{\textit{Geometry Effect (no EMCA, $\mathcal{L}_{\text{CE}}$ only)}}} \\
    M1: Hyperbolic HGA only          & 66.16 & 49.77 & 53.36 \\
    G2: Euclidean HGA only           & 64.07 & 50.96 & 51.35 \\
    \midrule
    \rowcolor{rowgray}
    \multicolumn{4}{c}{\textbf{\textit{Geometric Supervision Effect (HGA + EMCA)}}} \\
    L0: $\mathcal{L}_{\text{CE}}$ only           & 63.90 & 45.95 & 49.47 \\
    L1: $+\,\mathcal{L}_{\text{hyp}}$  & 64.17 & 46.50 & 49.70 \\
    L2: $+\,\mathcal{L}_{\text{radius}}$ & 65.61 & 48.87 & 50.85 \\
    \rowcolor{rowhighlight}
    \method (full)                  & \best{68.97} & 50.59 & \best{53.32} \\
    \midrule
    \rowcolor{rowgray}
    \multicolumn{4}{c}{\textbf{\textit{Euclidean Counterparts ($\mathcal{L}_{\text{CE}}$ only)}}} \\
    G1: Euclidean EMCA only (no HGA)  & 64.89 & 46.62 & 50.08 \\
    G3: Euclidean HGA + Euclidean EMCA & 65.34 & 47.32 & 50.22 \\
    \bottomrule
  \end{tabular}
\end{table}

\paragraph{Geometry effect.}
Comparing M1 (Hyperbolic HGA only) with G2 (Euclidean HGA only), both using $\mathcal{L}_{\text{CE}}$ and no EMCA, isolates the effect of the parameter space. Hyperbolic weight-space modulation yields +2.09~pp WA (66.16 vs.\ 64.07), indicating a stronger inductive bias for audio encoder adaptation than its Euclidean counterpart.

\paragraph{Geometric supervision effect.}
Adding EMCA to HGA without geometric losses (L0, 63.90) drops below HGA-only performance (M1, 66.16). Without radius ordering supervision, the three parallel branches produce redundant representations that interfere with HGA's outputs. Among individual loss terms, $\mathcal{L}_{\text{radius}}$ (L2, 65.61) contributes the most by establishing the branch-separation prior; $\mathcal{L}_{\text{hyp}}$ (L1, 64.17) alone provides marginal gain but combines with $\mathcal{L}_{\text{radius}}$ to anchor the ordered structure to emotion semantics. The full combination brings \method to 68.97~WA (+5.07~pp over L0), indicating that multi-granularity aggregation in hyperbolic space requires geometry-aware supervision to be effective.

\paragraph{Euclidean counterparts.}
G1 and G3 replace all hyperbolic operations with Euclidean ones under identical architecture and parameter count, trained with $\mathcal{L}_{\text{CE}}$ only. The auxiliary losses $\mathcal{L}_{\text{hyp}}$ and $\mathcal{L}_{\text{radius}}$ are not applied because they are defined on Poincar\'{e} ball geometry and have no meaningful Euclidean analogue. G3 (Euclidean HGA + EMCA, 65.34) falls $-$3.63~pp below full \method (68.97), indicating that the performance gap originates from hyperbolic geometry and its supervision, not from structural complexity alone.

\subsection{Cross-Dataset Generalization}
\label{sec:cross_dataset}

To examine the zero-shot transfer capability of \method's hyperbolic representations, we transfer models trained on MELD directly to RAVDESS, SAVEE, and IEMOCAP without any target domain adaptation. Following~\cite{li2025emorl}, the instruction prompt at test time lists the target dataset's emotion categories while all model weights remain fixed from MELD training. We compare \method against non-LALM baselines, the full-parameter SFT+IR reference from EMO-RL~\cite{li2025emorl}, and the Qwen2-Audio zero-shot baseline. These target datasets differ from MELD across multiple dimensions, including recording environment, speaker scale, emotion taxonomy, and language register, making the setting a rigorous test of representational generalization under cross-corpus shift.

\begin{table}[t]
  \centering
  \fontsize{7.8pt}{9.2pt}\selectfont
  \setlength{\tabcolsep}{3pt}
  \caption{\textbf{Zero-shot cross-dataset generalization (trained on MELD, WA \%).} No target domain fine-tuning. $\dagger$: full-parameter reference from~\cite{li2025emorl}. \textbf{Bold}: best excluding full-parameter reference.}
  \label{tab:cross}
  \begin{tabular}{llccc}
    \toprule
    \textbf{Model Type} & \textbf{Method} & \makecell{\textbf{RAVDESS}\\\textbf{WA}} & \makecell{\textbf{SAVEE}\\\textbf{WA}} & \makecell{\textbf{IEMOCAP}\\\textbf{WA}} \\
    \midrule
    \rowcolor{rowgray}
    \multicolumn{5}{c}{\textbf{\textit{Non-LALM Methods}}} \\
    Non-LALM & HuBERT large       & 25.02 & 31.54 & 44.60 \\
    Non-LALM & WavLM large        & 33.90 & 34.10 & 48.59 \\
    Non-LALM & data2vec 2.0 large & 34.21 & 37.79 & 47.43 \\
    Non-LALM & Whisper large V3   & 40.68 & 42.18 & 46.14 \\
    \midrule
    LALM-FT$\dagger$ & SFT + IR~\cite{li2025emorl} & 59.83 & 71.52 & 82.74 \\
    \midrule
    LALM & Qwen2-Audio (Zero-shot) & 50.42 & 30.83 & 51.52 \\
    \rowcolor{rowhighlight}
    LALM+PEFT & \method\ (ours) & \best{56.33} & \best{66.67} & \best{76.14} \\
    \bottomrule
  \end{tabular}
\end{table}

As shown in Table~\ref{tab:cross}, \method outperforms all non-LALM baselines and the Qwen2-Audio zero-shot baseline across all three out-of-domain corpora without target-domain adaptation. The largest gain appears on SAVEE: the zero-shot baseline reaches 30.83 WA, whereas \method achieves 66.67 WA (+35.8~pp), and exceeds the best non-LALM baseline (Whisper, 42.18) by +24.5~pp. On RAVDESS and IEMOCAP, \method also improves over the zero-shot baseline by +5.9~pp and +24.6~pp, respectively. We also observe that the zero-shot baseline varies substantially across corpora (50.42 on RAVDESS vs.\ 30.83 on SAVEE), while \method remains comparatively stable, suggesting that hyperbolic representations may encode emotion structure that transfers better across recording conditions.

\subsection{Per-Class Analysis}

\method improves F1 on six of seven MELD classes (Table~\ref{tab:per_class}). The most dramatic gains appear on neutral (+75.9~pp) and joy (+43.1~pp), classes where the zero-shot baseline exhibits severe under-prediction due to class imbalance. The sole exception is surprise, whose F1 drops from 77.20\% to 57.90\%. This decline reflects rebalancing rather than model failure: the zero-shot baseline inflates surprise F1 by over-predicting it at the expense of neutral and joy. After adaptation, predictions distribute more evenly, with surprise's main confusions shifting to neutral (15.7\%) and joy (12.5\%); the model still retains 61.31\% recall on surprise, confirming the class remains well-represented in the learned manifold.

\begin{table}[t]
  \centering
  \fontsize{7.5pt}{8.8pt}\selectfont
  \setlength{\tabcolsep}{2.8pt}
  \caption{\textbf{Per-class F1 on MELD (\%).} Comparison between zero-shot baseline and \method.}
  \label{tab:per_class}
  \begin{tabular}{lccccccc}
    \toprule
    \textbf{Method} & \textbf{ang.} & \textbf{dis.} & \textbf{fear} & \textbf{joy} & \textbf{neu.} & \textbf{sad.} & \textbf{sur.} \\
    \midrule
    Qwen2-Audio Base & 42.00 & 23.50 & 30.00 & 22.10 & 6.60  & 28.40 & 77.20 \\
    \rowcolor{rowhighlight}
    \method (ours)   & \best{55.16} & \best{33.30} & \best{36.05} & \best{65.18} & \best{82.50} & \best{43.14} & 57.90 \\
    \bottomrule
  \end{tabular}
\end{table}

\subsection{Geometric Analysis of Hyperbolic Space}

\subsubsection{Verification of Hyperbolic Structure in Audio Encoder Representations}
\label{sec:geo_analysis}

Audio encoder representations evolve from coarse-grained acoustic patterns (smaller hyperbolic radius) to fine-grained emotion semantics (larger radius) in the Poincar\'{e} ball, motivating HGA's radius modulation. We compute the Gromov $\delta$-hyperbolicity index (smaller $\delta$ indicates geometry closer to tree-like hyperbolic structure) of representations from each layer of the Qwen2-Audio audio encoder and compare with the $\delta$ values of random Gaussian vectors of the same dimension. At all sampled layers, $\delta_\text{audio}$ is substantially smaller than the random baseline $\delta_\text{random}$ (full per-layer plot in Appendix~E), The ratio $\delta_\text{audio}/\delta_\text{random}$ reaches as low as 0.014 at layer 15, meaning audio representations are approximately 70 times closer to hyperbolic geometry than random vectors; even at the layer with the highest ratio (layer 27), the ratio is only 0.273, indicating that hyperbolic structure pervades the entire encoder and providing direct empirical evidence for \method's choice of $\mathcal{D}_c^{d_a}$ as the adaptation space.

\subsubsection{Layer-Adaptive Hyperbolic Radius Modulation by HGA}

Figure~\ref{fig:hga_radius} shows the distribution of diagonal scaling parameters $\mathbf{s}^{(\ell)}$ (Equation~(\ref{eq:hga})) learned by HGA after \method training. Scaling factors vary substantially across layers rather than converging to a uniform constant. Shallow layers ($\ell < 10$) tend toward smaller scaling factors, while deep layers ($\ell > 20$) tend toward larger ones, consistent with the geometric prediction of Proposition~\ref{prop:radius_decomp}. This gradient confirms that HGA learns layer-adaptive radius modulation aligned with the encoder's representational hierarchy.

\begin{figure}[t]
  \centering
  \includegraphics[width=\columnwidth]{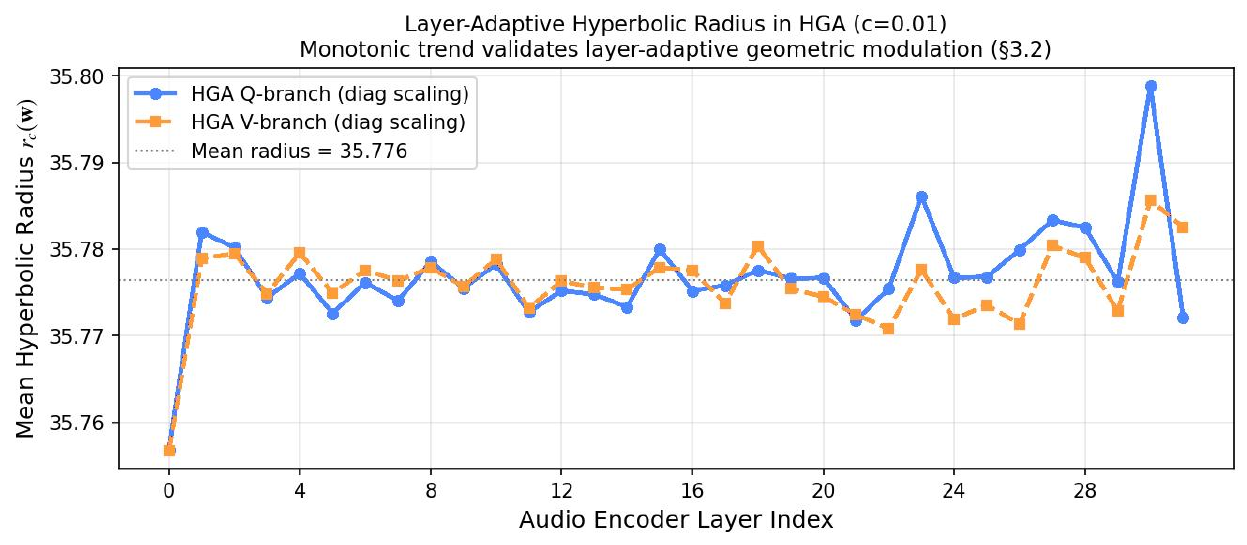}
  \caption{Distribution of HGA diagonal scaling parameters $\mathbf{s}^{(\ell)}$ across layers on MELD. Shallow layers exhibit smaller scaling (contraction, coarse-grained) while deep layers exhibit larger scaling (expansion, fine-grained), consistent with hierarchical geometry.}
  \label{fig:hga_radius}
\end{figure}

\section{Limitations and Broader Impact}

While HyPASE demonstrates robust generalization, future work should validate transferability beyond the Qwen2-Audio backbone and English-centric benchmarks. Although tuning only 0.12\% of parameters aligns with Green AI principles, real-world deployment requires fairness auditing to ensure hyperbolic adaptation does not exacerbate demographic or accent-based disparities inherited from pre-trained encoders.

\section{Conclusion}

In this paper, we introduced HyPASE, a hyperbolic parameter-efficient fine-tuning framework that moves beyond flat Euclidean spaces for Large Audio-Language Model adaptation. By grounding layer-adaptive weight modulation (HGA) and multi-scale semantic aggregation (EMCA) within the Poincar\'{e} ball, HyPASE explicitly aligns the adaptation process with the inherent multi-granularity of speech emotion. Empirical results demonstrate that this geometric prior achieves competitive recognition performance---particularly in class-imbalanced emotion recognition---and robust zero-shot cross-corpus generalization with negligible parameter overhead, outperforming Euclidean baselines on the majority of evaluation dimensions. Ultimately, we hope this work inspires further exploration of non-Euclidean geometry in multimodal foundation models. Code, pretrained weights, and training logs are publicly available at \url{https://github.com/LilSicko/HyPase} to ensure full reproducibility.

\begin{acks}
This work was supported by the
\grantsponsor{nsfc}
{National Natural Science Foundation of China}
{}
under Grant No.~\grantnum{nsfc}{62201389},
and by the
\grantsponsor{frfcu}
{Fundamental Research Funds for the Central Universities}
{}
under Grant No.~\grantnum{frfcu}{22120230311}.
\end{acks}

\appendix
\providecommand{\Rad}{}\renewcommand{\Rad}{\operatorname{Rad}}
\providecommand{\artanh}{}\renewcommand{\artanh}{\operatorname{artanh}}
\providecommand{\warn}{}\renewcommand{\warn}[1]{\textcolor{myred}{#1}}

\section{Proofs of Theoretical Results}

We prove Theorem~1 and Proposition~1 from the main paper. Both proofs
rely on the following unifying lemma, which identifies the hyperbolic
radius of an exp-mapped tangent vector with the Euclidean norm of that
vector.

\begin{lemma}[Tangent--Radius Identity]
\label{lem:tangent_radius}
For every $\mathbf{v}\in T_{\mathbf{0}}\mathcal{D}_c^d\cong\mathbb{R}^d$,
\begin{equation}
  \Rad_{\exp_{\mathbf{0}}^{D,c}(\mathbf{v})} = \|\mathbf{v}\|.
  \label{eq:lemma}
\end{equation}
\end{lemma}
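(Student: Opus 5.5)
The plan is to write down the standard closed form of the exponential map at the origin of the Poincaré ball, compute the Euclidean norm of its image, and substitute that norm into the radius formula of Equation~(\ref{eq:hyp_radius}). The $\tanh$ in the exponential map should cancel against the $\operatorname{artanh}$ in the radius, and the factors $\sqrt{c}$ and $2$ should cancel by the conformal-factor convention at the origin.

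\textbf{Step 1 (definition).} The paper only says ``standard'', so I fix the convention of~\cite{ganea2018hnn}: for $\mathbf{v}\neq\mathbf{0}$,
\begin{equation*}
  \exp_{\mathbf{0}}^{D,c}(\mathbf{v}) = \tanh\!\bigl(\sqrt{c}\,\|\mathbf{v}\|\bigr)\,\frac{\mathbf{v}}{\sqrt{c}\,\|\mathbf{v}\|},
\end{equation*}
with $\exp_{\mathbf{0}}^{D,c}(\mathbf{0})=\mathbf{0}$. This expression is the general formula $\tanh(\sqrt{c}\,\lambda_{\mathbf{x}}^c\|\mathbf{v}\|/2)$ evaluated at $\mathbf{x}=\mathbf{0}$, where the conformal factor is $\lambda_{\mathbf{0}}^c=2$.

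\textbf{Step 2 (norm).} Write $\mathbf{x}=\exp_{\mathbf{0}}^{D,c}(\mathbf{v})$. Since $\tanh\ge 0$ on nonnegative arguments,
\begin{equation*}
  \|\mathbf{x}\| = \frac{\tanh(\sqrt{c}\,\|\mathbf{v}\|)}{\sqrt{c}}.
\end{equation*}
Because $\tanh<1$, this also gives $c\|\mathbf{x}\|^2<1$, so $\mathbf{x}$ lies in $\mathcal{D}_c^d$.

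\textbf{Step 3 (radius).} Substituting into Equation~(\ref{eq:hyp_radius}) gives
\begin{equation*}
  \operatorname{Rad}_{\mathbf{x}} = \frac{2}{\sqrt{c}}\operatorname{artanh}\bigl(\tanh(\sqrt{c}\,\|\mathbf{v}\|)\bigr) = 2\|\mathbf{v}\|.
\end{equation*}
The case $\mathbf{v}=\mathbf{0}$ is trivial, since both sides are $0$.

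\textbf{Main obstacle: the factor of $2$.} With the exp-map convention above, the calculation yields $2\|\mathbf{v}\|$ rather than $\|\mathbf{v}\|$. I would resolve this as follows. The tangent space $T_{\mathbf{0}}\mathcal{D}_c^d$ carries the Riemannian metric $\lambda_{\mathbf{0}}^2\langle\cdot,\cdot\rangle = 4\langle\cdot,\cdot\rangle$, so the Riemannian norm of $\mathbf{v}$ is $2\|\mathbf{v}\|$. The identity $d(\mathbf{0},\exp_{\mathbf{0}}\mathbf{v})=\|\mathbf{v}\|_{\mathbf{0}}$ then holds automatically, because geodesics are minimizing in a Cartan--Hadamard manifold. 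I would therefore read $\|\mathbf{v}\|$ in Equation~(\ref{eq:lemma}) as the Riemannian norm at the origin.

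An equivalent fix is to use the exp map written as $\tanh(\sqrt{c}\|\mathbf{v}\|/2)\,\mathbf{v}/(\sqrt{c}\|\mathbf{v}\|)$, which makes the identity hold with the Euclidean norm. I would state explicitly which normalization is adopted. For the downstream Theorem~\ref{thm:radius_scaling} and Proposition~\ref{prop:radius_decomp}, the constant does not matter: only proportionality $\operatorname{Rad}\propto\|\mathbf{v}\|$ is used, and any constant cancels in the ratios.
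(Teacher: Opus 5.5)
Your computation is the paper's proof once you commit to your second normalization. The paper defines $\exp_{\mathbf{0}}^{D,c}(\mathbf{v})=\tanh\bigl(\tfrac{\sqrt{c}}{2}\|\mathbf{v}\|\bigr)\frac{\mathbf{v}}{\sqrt{c}\|\mathbf{v}\|}$, and with that definition your Steps~2--3 give exactly $\|\mathbf{v}\|$ in the Euclidean norm via $\operatorname{artanh}(\tanh t)=t$.

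You are right that Ganea et al.'s $\exp_{\mathbf{0}}^{c}$ (with $\lambda_{\mathbf{0}}^{c}=2$) would give $2\|\mathbf{v}\|$; the paper's map is that one applied to $\mathbf{v}/2$. However, $\|\cdot\|$ is the Euclidean norm throughout the paper, for example in $\|\log_{\mathbf{0}}^{D,c}(\mathbf{x})\|=\operatorname{Rad}_{\mathbf{x}}$. So you should adopt the halved convention outright rather than the Riemannian-norm reinterpretation.
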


\begin{proof}
The exponential map at the origin has the closed form
\begin{equation}
  \exp_{\mathbf{0}}^{D,c}(\mathbf{v})
  = \tanh\!\Bigl(\tfrac{\sqrt{c}}{2}\|\mathbf{v}\|\Bigr)
    \frac{\mathbf{v}}{\sqrt{c}\|\mathbf{v}\|},
  \label{eq:expmap}
\end{equation}
so its Euclidean norm is
\begin{equation}
  \bigl\|\exp_{\mathbf{0}}^{D,c}(\mathbf{v})\bigr\|
  = \frac{1}{\sqrt{c}}\tanh\!\Bigl(\tfrac{\sqrt{c}}{2}\|\mathbf{v}\|\Bigr).
  \label{eq:expmap_norm}
\end{equation}
Substituting into the hyperbolic radius definition
$\Rad_{\mathbf{x}} = \tfrac{2}{\sqrt{c}}\artanh(\sqrt{c}\|\mathbf{x}\|)$
and setting $t := \tfrac{\sqrt{c}}{2}\|\mathbf{v}\|$:
\begin{align}
  \Rad_{\exp_{\mathbf{0}}^{D,c}(\mathbf{v})}
  &= \frac{2}{\sqrt{c}}\artanh\!\Bigl(\sqrt{c}
      \cdot\frac{\tanh(t)}{\sqrt{c}}\Bigr)
   = \frac{2}{\sqrt{c}}\artanh\!\bigl(\tanh(t)\bigr) \notag \\
  &= \frac{2}{\sqrt{c}}\cdot t
   = \frac{2}{\sqrt{c}}\cdot\frac{\sqrt{c}}{2}\|\mathbf{v}\|
   = \|\mathbf{v}\|,
  \label{eq:lemma_proof}
\end{align}
where the third equality uses $\artanh(\tanh(t))=t$ for all $t\in\mathbb{R}$. \qed
\end{proof}

The logarithmic map at the origin is the exact inverse:
$\log_{\mathbf{0}}^{D,c}(\mathbf{x})
  = \tfrac{2}{\sqrt{c}}\artanh(\sqrt{c}\|\mathbf{x}\|)\tfrac{\mathbf{x}}{\|\mathbf{x}\|}$,
so
\begin{equation}
  \|\log_{\mathbf{0}}^{D,c}(\mathbf{x})\|
  = \tfrac{2}{\sqrt{c}}\artanh(\sqrt{c}\|\mathbf{x}\|)
  = \Rad_{\mathbf{x}}.
  \label{eq:log_rad}
\end{equation}

\begin{theorem}[Restatement of Theorem~\ref{thm:radius_scaling}]
\label{thm:radius_scaling_app}
For any $\mathbf{x}\in\mathcal{D}_c^d$ and scalar $s>0$,
$s\otimes_c\mathbf{x} := \exp_{\mathbf{0}}^{D,c}(s\cdot\log_{\mathbf{0}}^{D,c}(\mathbf{x}))$
satisfies $\Rad_{s\otimes_c\mathbf{x}} = s\cdot\Rad_{\mathbf{x}}$.
\end{theorem}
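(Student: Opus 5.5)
The plan is to reduce the statement to the Tangent--Radius Identity (Lemma~\ref{lem:tangent_radius}) together with the observation~(\ref{eq:log_rad}) that $\|\log_{\mathbf{0}}^{D,c}(\mathbf{x})\| = \Rad_{\mathbf{x}}$. By definition, $s\otimes_c\mathbf{x}$ is the exponential map of the tangent vector $\mathbf{v} := s\cdot\log_{\mathbf{0}}^{D,c}(\mathbf{x})$. This is exactly the form to which the lemma applies. No explicit M\"{o}bius-gyrovector formula for $s\otimes_c\mathbf{x}$ is needed.

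The chain of steps is as follows. First, apply Lemma~\ref{lem:tangent_radius} with this $\mathbf{v}$, which gives $\Rad_{s\otimes_c\mathbf{x}} = \|s\cdot\log_{\mathbf{0}}^{D,c}(\mathbf{x})\|$. Second, use absolute homogeneity of the Euclidean norm; since $s>0$, this equals $s\,\|\log_{\mathbf{0}}^{D,c}(\mathbf{x})\|$. Third, invoke~(\ref{eq:log_rad}) to replace $\|\log_{\mathbf{0}}^{D,c}(\mathbf{x})\|$ by $\Rad_{\mathbf{x}}$, which yields $s\cdot\Rad_{\mathbf{x}}$. Written as a single display:
\begin{equation*}
  \Rad_{s\otimes_c\mathbf{x}}
  = \bigl\|s\log_{\mathbf{0}}^{D,c}(\mathbf{x})\bigr\|
  = s\,\bigl\|\log_{\mathbf{0}}^{D,c}(\mathbf{x})\bigr\|
  = s\,\Rad_{\mathbf{x}}.
\end{equation*}

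The only point needing care is the degenerate case $\mathbf{x}=\mathbf{0}$. There the closed forms of $\log_{\mathbf{0}}^{D,c}$ and $\exp_{\mathbf{0}}^{D,c}$ involve division by $\|\mathbf{x}\|$ or $\|\mathbf{v}\|$. I would handle it by the standard convention $\log_{\mathbf{0}}^{D,c}(\mathbf{0})=\mathbf{0}$ and $\exp_{\mathbf{0}}^{D,c}(\mathbf{0})=\mathbf{0}$, the continuous extension since $\tanh(t)/t\to1$. Then $s\otimes_c\mathbf{0}=\mathbf{0}$ and both sides vanish.

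I would also note that $\|\log_{\mathbf{0}}^{D,c}(\mathbf{x})\|$ is finite because $\sqrt{c}\|\mathbf{x}\|<1$ on the ball. The argument places no constraint on $s$ beyond $s>0$: for large $s$, the exponential map still lands strictly inside the ball because $\tanh<1$. Apart from these remarks, nothing is a genuine obstacle, and the proof is a three-line composition of the lemma with~(\ref{eq:log_rad}).
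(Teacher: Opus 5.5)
Your proof is correct and follows the paper's argument: the paper likewise sets $\mathbf{v}=\log_{\mathbf{0}}^{D,c}(\mathbf{x})$, uses Eq.~(\ref{eq:log_rad}) to get $\|\mathbf{v}\|=\operatorname{Rad}_{\mathbf{x}}$, and applies Lemma~\ref{lem:tangent_radius} to $s\mathbf{v}$ together with homogeneity of the norm. Your treatment of the degenerate point $\mathbf{x}=\mathbf{0}$ is a small rigor addition the paper omits, and you rightly skip the paper's intermediate computation of $\|s\otimes_c\mathbf{x}\|$, which the argument does not need.
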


\begin{proof}
Let $\mathbf{v} = \log_{\mathbf{0}}^{D,c}(\mathbf{x})$.
By Eq.~(\ref{eq:log_rad}), $\|\mathbf{v}\|=\Rad_{\mathbf{x}}$.
The scaled tangent vector $s\mathbf{v}$ maps back via the exponential map; its
Euclidean norm follows from Eq.~(\ref{eq:expmap_norm}) with $\|\mathbf{v}\|$ replaced by $s\|\mathbf{v}\|$:
\begin{equation}
  \bigl\|s\otimes_c\mathbf{x}\bigr\|
  = \bigl\|\exp_{\mathbf{0}}^{D,c}(s\mathbf{v})\bigr\|
  = \frac{1}{\sqrt{c}}\tanh\!\Bigl(\tfrac{\sqrt{c}}{2}\cdot s\|\mathbf{v}\|\Bigr).
  \label{eq:thm1_norm}
\end{equation}
Applying Lemma~\ref{lem:tangent_radius} to $s\mathbf{v}$:
\begin{equation}
  \Rad_{s\otimes_c\mathbf{x}}
  = \Rad_{\exp_{\mathbf{0}}^{D,c}(s\mathbf{v})}
  = \|s\mathbf{v}\|
  = s\|\mathbf{v}\|
  = s\cdot\Rad_{\mathbf{x}}. \qedhere
\end{equation}
\end{proof}

\begin{proposition}[Restatement of Proposition~\ref{prop:radius_decomp}]
\label{prop:radius_decomp_app}
Let $\mathbf{W}_s^{(\ell)}=\operatorname{diag}(\mathbf{s}^{(\ell)})$ with
$\mathbf{s}^{(\ell)}\in\mathbb{R}_{>0}^{d_a}$,
and let $\mathbf{w}_i = \log_{\mathbf{0}}^{D,c}(\mathbf{w}_{D,0,i}^{(\ell)})$
denote the tangent-space preimage of the $i$-th frozen weight row.
Then the adapted row $\mathbf{w}_{D,i}^{(\ell)}
  = \mathbf{W}_s^{(\ell)}\otimes_c\mathbf{w}_{D,0,i}^{(\ell)}$
satisfies:
\begin{equation}
  \Rad_{\mathbf{w}_{D,i}^{(\ell)}}
  = s_{\text{eff},i}^{(\ell)}\cdot\Rad_{\mathbf{w}_{D,0,i}^{(\ell)}},
  \qquad
  s_{\text{eff},i}^{(\ell)} := \frac{\|\mathbf{w}_i\odot\mathbf{s}^{(\ell)}\|}{\|\mathbf{w}_i\|}.
  \label{eq:prop1_proof}
\end{equation}
\end{proposition}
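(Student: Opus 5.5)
The plan is to mirror the proof of Theorem~\ref{thm:radius_scaling_app}, replacing the scalar $s$ by the diagonal matrix $\mathbf{W}_s^{(\ell)}$. First I will fix the definition of diagonal M\"{o}bius multiplication in the same form the paper uses for the scalar case, namely $\mathbf{D}\otimes_c\mathbf{x} := \exp_{\mathbf{0}}^{D,c}\bigl(\mathbf{D}\,\log_{\mathbf{0}}^{D,c}(\mathbf{x})\bigr)$. This agrees with the implementation description, where each row of $\log_{\mathbf{0}}^{D,c}$ is scaled element-wise by $\mathbf{s}^{(\ell)}$. With $\mathbf{w}_i = \log_{\mathbf{0}}^{D,c}(\mathbf{w}_{D,0,i}^{(\ell)})$, the adapted row is then
\[
\mathbf{w}_{D,i}^{(\ell)} = \exp_{\mathbf{0}}^{D,c}\bigl(\operatorname{diag}(\mathbf{s}^{(\ell)})\,\mathbf{w}_i\bigr) = \exp_{\mathbf{0}}^{D,c}\bigl(\mathbf{w}_i\odot\mathbf{s}^{(\ell)}\bigr).
\]

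Second, I apply Lemma~\ref{lem:tangent_radius} to the tangent vector $\mathbf{w}_i\odot\mathbf{s}^{(\ell)}$. This gives $\Rad_{\mathbf{w}_{D,i}^{(\ell)}} = \|\mathbf{w}_i\odot\mathbf{s}^{(\ell)}\|$. Third, Eq.~(\ref{eq:log_rad}) gives $\Rad_{\mathbf{w}_{D,0,i}^{(\ell)}} = \|\mathbf{w}_i\|$; equivalently, apply the lemma to $\mathbf{w}_i$, using that $\exp_{\mathbf{0}}^{D,c}$ and $\log_{\mathbf{0}}^{D,c}$ are mutually inverse. Multiplying and dividing by $\|\mathbf{w}_i\|$ then yields $\Rad_{\mathbf{w}_{D,i}^{(\ell)}} = s_{\text{eff},i}^{(\ell)}\,\Rad_{\mathbf{w}_{D,0,i}^{(\ell)}}$ with $s_{\text{eff},i}^{(\ell)}$ exactly as in Eq.~(\ref{eq:prop1_proof}).

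The only real obstacle is well-definedness. The ratio defining $s_{\text{eff},i}^{(\ell)}$ requires $\mathbf{w}_i\neq\mathbf{0}$, and the closed forms of $\exp_{\mathbf{0}}^{D,c}$ and $\log_{\mathbf{0}}^{D,c}$ have a removable singularity at $\mathbf{0}$. I will treat a zero row separately: both radii vanish, so the identity holds for any value assigned to $s_{\text{eff},i}^{(\ell)}$, and I will state that the ratio is taken for nonzero rows. Since every $s_j>0$, $\mathbf{w}_i\neq\mathbf{0}$ implies $\mathbf{w}_i\odot\mathbf{s}^{(\ell)}\neq\mathbf{0}$, so the lemma applies without further caveats.

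As a consistency remark, $s_{\text{eff},i}^{(\ell)}$ lies between $\min_j s_j$ and $\max_j s_j$. When $\mathbf{s}^{(\ell)} = s\mathbf{1}$ it reduces to $s$, which recovers Theorem~\ref{thm:radius_scaling_app}.
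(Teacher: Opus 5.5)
Your proposal is correct and follows essentially the same route as the paper: write the adapted row as $\exp_{\mathbf{0}}^{D,c}(\mathbf{s}^{(\ell)}\odot\mathbf{w}_i)$ using the log/exp definition of diagonal M\"{o}bius multiplication, apply the tangent--radius lemma to both rows, and divide. Your separate treatment of the zero row, where $s_{\text{eff},i}^{(\ell)}$ is undefined but both radii vanish, is a point of care that the paper's proof omits. Your closing remark matches the paper's corollary on effective scaling bounds.
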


\begin{proof}
By definition of diagonal M\"{o}bius multiplication via log/exp:
\[
\mathbf{w}_{D,i}^{(\ell)}
  = \exp_{\mathbf{0}}^{D,c}\!\bigl(\mathbf{s}^{(\ell)}\odot
    \log_{\mathbf{0}}^{D,c}(\mathbf{w}_{D,0,i}^{(\ell)})\bigr)
  = \exp_{\mathbf{0}}^{D,c}(\mathbf{s}^{(\ell)}\odot\mathbf{w}_i).
\]
Applying Lemma~\ref{lem:tangent_radius} twice:
\begin{equation}
  \Rad_{\mathbf{w}_{D,i}^{(\ell)}} = \|\mathbf{s}^{(\ell)}\odot\mathbf{w}_i\|,
  \qquad
  \Rad_{\mathbf{w}_{D,0,i}^{(\ell)}} = \|\mathbf{w}_i\|.
  \label{eq:prop1_rad}
\end{equation}
Dividing yields Eq.~(\ref{eq:prop1_proof}). \qed
\end{proof}

\begin{corollary}[Effective scaling bounds]
\label{cor:bounds}
$\min_j s_j^{(\ell)} \;\le\; s_{\text{eff},i}^{(\ell)} \;\le\; \max_j s_j^{(\ell)}$.
\end{corollary}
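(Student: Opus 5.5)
Our starting point is the explicit formula from Proposition~\ref{prop:radius_decomp_app}, $s_{\text{eff},i}^{(\ell)} = \|\mathbf{w}_i\odot\mathbf{s}^{(\ell)}\|/\|\mathbf{w}_i\|$, so no further hyperbolic geometry is needed: the corollary reduces to an elementary inequality about Euclidean norms in the tangent space. We assume $\mathbf{w}_i\neq\mathbf{0}$, as is already implicit in the definition of $s_{\text{eff},i}^{(\ell)}$. (For a zero row both radii vanish and any factor works, so the statement is vacuous.)

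Write $m:=\min_j s_j^{(\ell)}$ and $M:=\max_j s_j^{(\ell)}$, and expand the squared numerator coordinatewise:
\begin{equation*}
  \|\mathbf{w}_i\odot\mathbf{s}^{(\ell)}\|^2=\sum_{j=1}^{d_a}\bigl(s_j^{(\ell)}\bigr)^2 w_{ij}^2 .
\end{equation*}
Since every $s_j^{(\ell)}>0$, we have $m^2\le (s_j^{(\ell)})^2\le M^2$ for each $j$. Multiplying by $w_{ij}^2\ge 0$ and summing over $j$ gives
\begin{equation*}
  m^2\|\mathbf{w}_i\|^2\le\|\mathbf{w}_i\odot\mathbf{s}^{(\ell)}\|^2\le M^2\|\mathbf{w}_i\|^2 .
\end{equation*}
We then take nonnegative square roots, which is legitimate because $m,M>0$, and divide by $\|\mathbf{w}_i\|>0$. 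This yields $m\le s_{\text{eff},i}^{(\ell)}\le M$.

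Equivalently, $(s_{\text{eff},i}^{(\ell)})^2$ is a convex combination of the values $(s_j^{(\ell)})^2$ with weights $w_{ij}^2/\|\mathbf{w}_i\|^2$, so it lies between their minimum and maximum. This viewpoint also explains the paper's remark that $s_{\text{eff},i}^{(\ell)}$ is jointly determined by $\mathbf{s}^{(\ell)}$ and the row's weight distribution.

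There is no real obstacle here. The only points requiring care are the positivity of $\mathbf{s}^{(\ell)}$, guaranteed by the softplus parameterization, which justifies taking square roots, and the degenerate zero-row case.
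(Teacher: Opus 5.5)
Your proof is correct and matches the paper's argument, which bounds the numerator by $s_{\min}\|\mathbf{w}_i\|\le\|\mathbf{s}^{(\ell)}\odot\mathbf{w}_i\|\le s_{\max}\|\mathbf{w}_i\|$ and then divides by $\|\mathbf{w}_i\|$. Your coordinatewise expansion justifies the numerator bound that the paper simply asserts, and your remark on the zero row makes explicit the assumption $\mathbf{w}_i\neq\mathbf{0}$ that the paper leaves implicit.
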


\begin{proof}
Let $s_{\min}=\min_j s_j^{(\ell)}$ and $s_{\max}=\max_j s_j^{(\ell)}$.
Bounding the numerator:
$s_{\min}\|\mathbf{w}_i\| \le \|\mathbf{s}^{(\ell)}\odot\mathbf{w}_i\|
\le s_{\max}\|\mathbf{w}_i\|$.
Dividing by $\|\mathbf{w}_i\|$ gives the result.
\end{proof}

\begin{corollary}[HGA forward-pass simplification]
\label{cor:hga_forward}
The HGA Euclidean weight matrix satisfies
$\mathbf{W}_{\mathrm{HGA},i}^{(\ell)}
  = \log_{\mathbf{0}}^{D,c}\!\bigl(\mathbf{w}_{D,i}^{(\ell)}\bigr)
  = \mathbf{s}^{(\ell)}\odot\mathbf{w}_i$,
i.e., the full exp--M\"{o}bius--log pipeline reduces to element-wise scaling
of the frozen weight rows in tangent space.
\end{corollary}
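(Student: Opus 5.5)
The plan is to unwind the definitions and then use the fact that $\log_{\mathbf{0}}^{D,c}$ and $\exp_{\mathbf{0}}^{D,c}$ are mutually inverse bijections between $\mathbb{R}^{d_a}$ and $\mathcal{D}_c^{d_a}$. Following Section~\ref{sec:prelim}, I read diagonal M\"{o}bius multiplication as
\[
\mathbf{D}\otimes_c\mathbf{x}:=\exp_{\mathbf{0}}^{D,c}\bigl(\mathbf{D}\,\log_{\mathbf{0}}^{D,c}(\mathbf{x})\bigr).
\]
This is the same reading already used in the proof of Proposition~\ref{prop:radius_decomp_app}, and it gives
\[
\mathbf{w}_{D,i}^{(\ell)}=\exp_{\mathbf{0}}^{D,c}\bigl(\mathbf{s}^{(\ell)}\odot\mathbf{w}_i\bigr).
\]

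The argument then runs in two steps.

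First, I identify the $i$-th row of Equation~(\ref{eq:hga}). It is
\[
\log_{\mathbf{0}}^{D,c}\bigl(\mathbf{W}_s^{(\ell)}\otimes_c\exp_{\mathbf{0}}^{D,c}(\mathbf{w}_{0,i}^{(\ell)})\bigr).
\]
The inner point $\exp_{\mathbf{0}}^{D,c}(\mathbf{w}_{0,i}^{(\ell)})$ is exactly $\mathbf{w}_{D,0,i}^{(\ell)}$. Its tangent preimage is $\mathbf{w}_i=\log_{\mathbf{0}}^{D,c}(\mathbf{w}_{D,0,i}^{(\ell)})=\mathbf{w}_{0,i}^{(\ell)}$, so the row equals $\log_{\mathbf{0}}^{D,c}(\mathbf{w}_{D,i}^{(\ell)})$. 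This is the first claimed equality.

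Second, I substitute the expression for $\mathbf{w}_{D,i}^{(\ell)}$ above. This gives $\log_{\mathbf{0}}^{D,c}\bigl(\exp_{\mathbf{0}}^{D,c}(\mathbf{s}^{(\ell)}\odot\mathbf{w}_i)\bigr)$, which collapses to $\mathbf{s}^{(\ell)}\odot\mathbf{w}_i$ by the inverse property.

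The only substantive step is verifying $\log_{\mathbf{0}}^{D,c}\circ\exp_{\mathbf{0}}^{D,c}=\mathrm{id}$ on all of $\mathbb{R}^{d_a}$, using the closed forms (\ref{eq:expmap}) and the log formula stated before (\ref{eq:log_rad}). Set $\mathbf{u}=\mathbf{s}^{(\ell)}\odot\mathbf{w}_i$.

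For $\mathbf{u}\neq\mathbf{0}$, Equation~(\ref{eq:expmap_norm}) gives $\sqrt{c}\,\|\exp_{\mathbf{0}}^{D,c}(\mathbf{u})\|=\tanh(\tfrac{\sqrt{c}}{2}\|\mathbf{u}\|)<1$. The artanh in the log map is therefore well defined and returns $\tfrac{\sqrt{c}}{2}\|\mathbf{u}\|$. The direction $\exp_{\mathbf{0}}^{D,c}(\mathbf{u})/\|\exp_{\mathbf{0}}^{D,c}(\mathbf{u})\|$ equals $\mathbf{u}/\|\mathbf{u}\|$. Multiplying through yields $\mathbf{u}$; this is essentially Lemma~\ref{lem:tangent_radius} together with direction preservation.

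I expect the main (minor) obstacle to be the degenerate case of a zero row $\mathbf{w}_i=\mathbf{0}$, where the closed forms contain $\mathbf{v}/\|\mathbf{v}\|$. I would handle it by the standard convention $\exp_{\mathbf{0}}^{D,c}(\mathbf{0})=\log_{\mathbf{0}}^{D,c}(\mathbf{0})=\mathbf{0}$, i.e.\ the continuous extension. Both sides of the corollary are then $\mathbf{0}$. Because every $s_j^{(\ell)}>0$, $\mathbf{u}=\mathbf{0}$ happens only when $\mathbf{w}_i=\mathbf{0}$, so no other edge cases arise.

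Stacking the rows then gives $\mathbf{W}_{\mathrm{HGA}}^{(\ell)}=\mathbf{W}_0^{(\ell)}\operatorname{diag}(\mathbf{s}^{(\ell)})$, which is the claimed column-wise scaling. As a consistency check, this matches Proposition~\ref{prop:radius_decomp_app}, since taking norms recovers $\Rad_{\mathbf{w}_{D,i}^{(\ell)}}=\|\mathbf{s}^{(\ell)}\odot\mathbf{w}_i\|$.
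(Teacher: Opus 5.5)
Your proposal is correct and follows the paper's proof. Both rewrite $\mathbf{w}_{D,i}^{(\ell)}$ as $\exp_{\mathbf{0}}^{D,c}(\mathbf{s}^{(\ell)}\odot\mathbf{w}_i)$ using the log/exp definition of diagonal M\"{o}bius multiplication, then cancel it with $\log_{\mathbf{0}}^{D,c}$ as the exact inverse; your explicit check that $\log_{\mathbf{0}}^{D,c}\circ\exp_{\mathbf{0}}^{D,c}=\mathrm{id}$ from the closed forms, and your zero-row convention, are extra care that the paper leaves implicit.
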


\begin{proof}
From the proof of Proposition~\ref{prop:radius_decomp_app},
$\mathbf{w}_{D,i}^{(\ell)}
  = \exp_{\mathbf{0}}^{D,c}(\mathbf{s}^{(\ell)}\odot\mathbf{w}_i)$.
Applying $\log_{\mathbf{0}}^{D,c}$ (the exact inverse of $\exp_{\mathbf{0}}^{D,c}$)
gives $\mathbf{s}^{(\ell)}\odot\mathbf{w}_i$ directly.
\end{proof}

\section{Hyperbolic Operations Reference}

Table~\ref{tab:ops} provides a self-contained reference for all
hyperbolic operations used in \method.  Notation follows the main paper
throughout; $c>0$ is the curvature parameter.

\begin{table}[h]
  \centering
  \caption{Hyperbolic operations used in \method.}
  \label{tab:ops}
  \fontsize{7.5pt}{9.0pt}\selectfont
  \setlength{\tabcolsep}{3pt}
  \begin{tabular}{lp{5.5cm}}
    \toprule
    \textbf{Operation} & \textbf{Formula} \\
    \midrule
    Conformal factor &
      $\lambda_{c,\mathbf{x}} = \tfrac{2}{1-c\|\mathbf{x}\|^2}$ \\[2pt]
    M\"{o}bius addition &
      \makecell[l]{%
        $\mathbf{x}\oplus_c\mathbf{y} = N/D$,\\[1pt]
      $N=(1+2c\langle\mathbf{x},\mathbf{y}\rangle+c\|\mathbf{y}\|^2)\mathbf{x}$\\
        $\phantom{N=}+(1-c\|\mathbf{x}\|^2)\mathbf{y}$,\\[1pt]
        $D=1+2c\langle\mathbf{x},\mathbf{y}\rangle+c^2\|\mathbf{x}\|^2\|\mathbf{y}\|^2$%
      } \\[2pt]
    Poincar\'{e} ball &
      $\mathcal{D}_c^d = \{\mathbf{x}\in\mathbb{R}^d : c\|\mathbf{x}\|^2<1\}$ \\[2pt]
    Hyperbolic radius &
      $\Rad_{\mathbf{x}} = \tfrac{2}{\sqrt{c}}\artanh(\sqrt{c}\|\mathbf{x}\|)$ \\[2pt]
    Exp map at $\mathbf{0}$ &
      $\exp_{\mathbf{0}}^{D,c}(\mathbf{v})
        = \tanh\!\bigl(\tfrac{\sqrt{c}}{2}\|\mathbf{v}\|\bigr)
          \tfrac{\mathbf{v}}{\sqrt{c}\|\mathbf{v}\|}$ \\[2pt]
    Log map at $\mathbf{0}$ &
      $\log_{\mathbf{0}}^{D,c}(\mathbf{x})
        = \tfrac{2}{\sqrt{c}}\artanh(\sqrt{c}\|\mathbf{x}\|)
          \tfrac{\mathbf{x}}{\|\mathbf{x}\|}$ \\[2pt]
    M\"{o}bius scalar &
      $s\otimes_c\mathbf{x}
        = \exp_{\mathbf{0}}^{D,c}(s\cdot\log_{\mathbf{0}}^{D,c}(\mathbf{x}))$ \\[2pt]
    Diagonal M\"{o}bius &
      $\mathbf{D}\otimes_c\mathbf{x}
        = \exp_{\mathbf{0}}^{D,c}(\mathbf{d}\odot\log_{\mathbf{0}}^{D,c}(\mathbf{x}))$,
      $\mathbf{D}=\operatorname{diag}(\mathbf{d})$ \\[2pt]
    Geodesic distance &
      $d_c(\mathbf{x},\mathbf{y})
        = \tfrac{2}{\sqrt{c}}\artanh\!\bigl(\sqrt{c}\|\mathbf{x}\ominus_c\mathbf{y}\|\bigr)$ \\[2pt]
    Lorentz factor &
      $\gamma_i^K = (1-c\|\mathbf{k}_i\|^2)^{-1/2}$,
      $\mathbf{k}_i$: Klein image of $\mathbf{p}_i$ \\[2pt]
    Einstein midpoint &
      $\operatorname{EinMid}(\{\mathbf{p}_i\};\boldsymbol{\alpha})
        = \mathcal{K}^{-1}\!\Bigl(\tfrac{\sum_i\alpha_i\gamma_i^K\mathbf{k}_i}
                                       {\sum_i\alpha_i\gamma_i^K}\Bigr)$,
      $\mathcal{K}^{-1}$: Klein$\!\to\!$Poincar\'{e} \\[2pt]
    Klein projection &
      $\mathbf{k}_i = \tfrac{2\mathbf{p}_i}{1+c\|\mathbf{p}_i\|^2}$;
      inverse: $\mathbf{p}_i = \tfrac{\mathbf{k}_i}{1+\sqrt{1-c\|\mathbf{k}_i\|^2}}$ \\[2pt]
    Stable $\artanh$ &
      clamp input to $(-1{+}\epsilon,\, 1{-}\epsilon)$
      before $\tfrac{1}{2}\ln\tfrac{1+x}{1-x}$; use custom backward \\
    \bottomrule
  \end{tabular}
\end{table}

\noindent
All \method\ operations follow a unified \emph{lift--operate--project} pipeline:
Euclidean features are lifted to the Poincar\'{e} ball via $\exp_{\mathbf{0}}^{D,c}$,
combined via M\"{o}bius addition or diagonal scaling, and returned to tangent space
via $\log_{\mathbf{0}}^{D,c}$---a global chart that makes every operation
tractable in closed form without geodesic integration.
The Einstein midpoint used in EMCA branch fusion is formulated in the Klein model
and mapped back via $\mathcal{K}^{-1}$, exploiting its linearity over weighted sums.
The \emph{stable $\artanh$} row is critical in practice: gradient flow through the
curvature parameter $c$ amplifies near-boundary activations, and without the clamp the
backward pass diverges early in training; the custom backward uses the analytic
derivative $\tfrac{d}{dx}\artanh(x)=\tfrac{1}{1-x^2}$ evaluated at the clamped value.

\section{EMCA Parameter Count}

Table~\ref{tab:params} derives the stated $\approx$5.35M EMCA trainable
parameters under the default configuration
($d_t{=}3584,\,d_a{=}1280,\,d_h{=}256,\,K{=}4,\,C{=}7$).

\begin{table}[h]
  \centering
  \caption{EMCA parameter breakdown (default config).}
  \label{tab:params}
  \fontsize{7.5pt}{9.0pt}\selectfont
  \setlength{\tabcolsep}{4pt}
  \begin{tabular}{lcc}
    \toprule
    \textbf{Component} & \textbf{Formula} & \textbf{Params (M)} \\
    \midrule
    Task mixer $\mathbf{W}_\text{text},\mathbf{b}_\text{text}$
      & $d_t d_a + d_a$
      & $4.587+0.001=4.588$ \\
    Branch 1 ($r_1=d_a/8=160$)
      & $d_a r_1 + r_1 d_h$
      & $0.205+0.041=0.246$ \\
    Branch 2 ($r_2=d_a/16=80$)
      & $d_a r_2 + r_2 d_h$
      & $0.102+0.020=0.122$ \\
    Branch 3 ($r_3=d_a/32=40$)
      & $d_a r_3 + r_3 d_h$
      & $0.051+0.010=0.061$ \\
    Routing $\mathbf{w}_\text{scale}$
      & $3$ (learnable scalars)
      & $<0.001$ \\
    Projection $\mathbf{W}_\text{proj},\mathbf{b}_\text{proj}$
      & $d_h d_a + d_a$
      & $0.328+0.001=0.329$ \\
    Prefix query $\mathbf{Q}_\text{pfx}$
      & $K\times d_a$
      & $<0.001$ \\
    Prototypes $\{\mathbf{v}_k\}_{k=1}^C$
      & $C\times d_h$
      & $<0.001$ \\
    \midrule
    \textbf{EMCA total}
      & ---
      & $\approx\mathbf{5.35}$ \\
    \midrule
    HGA ($r{=}12$, $L{=}32$ layers, Q+V)
      & $2\times L\times 2d_a(1+r)$
      & $\approx1.06$ \\
    \textbf{Full \method\ (excl.\ projector)}
      & ---
      & $\approx\mathbf{6.41}$ \\
    \bottomrule
  \end{tabular}
\end{table}

\noindent
Adding the native multimodal projector ($\theta_\text{proj}$,
$\approx0.37$M) yields $\approx6.78$M total trainable parameters,
or $\approx0.12\%$ of Qwen2-Audio-7B (8.4B parameters).

\noindent
The text mixer $\mathbf{W}_\text{text}$ dominates the count
(4.588\,M out of 5.35\,M, or 85.8\%) because it must bridge
Qwen2-Audio's LLM text dimension ($d_t{=}3584$) down to the
audio encoder space ($d_a{=}1280$); no weight sharing with the
frozen backbone is possible here.
The three bottleneck branches intentionally use \emph{asymmetric} ranks
($d_a/8$, $d_a/16$, $d_a/32$) rather than a uniform rank so that each
branch occupies a distinct radial shell once embedded in the Poincar\'{e} ball,
providing a geometric scaffold for $\mathcal{L}_\text{radius}$ to supervise.
The prefix query $\mathbf{Q}_\text{pfx}$ ($K{\times}d_a$) and emotion
prototype vectors $\{\mathbf{v}_k\}_{k=1}^C$ ($C{\times}d_h$) together
add fewer than 10K parameters yet supply the cross-domain geometric anchors
that account for the strong zero-shot transfer results reported in the main paper.

\section{Hyperparameter Sensitivity}

All sensitivity experiments use MELD (7-class) as the test bed.
Table~\ref{tab:sens_single} sweeps four key parameters on an isolated
EMCA branch (\textit{without} HGA; $\tau{=}0.1$ code default).
Table~\ref{tab:sens_grid} shows the full
$\lambda_\text{hyp}\times\lambda_\text{radius}$ grid on the complete
\method\ (HGA enabled; $\tau{=}0.1$, baseline WA\,65.52\%,
\textit{cf}.\ main paper WA\,68.97\% which uses $\tau{=}0.07$).

\begin{table}[h]
  \centering
  \caption{Single-factor sensitivity on EMCA-only branch (HGA disabled). WA/UA/wF1 (\%).}
  \label{tab:sens_single}
  \fontsize{7.5pt}{9.0pt}\selectfont
  \setlength{\tabcolsep}{3.5pt}
  \begin{tabular}{lcccl}
    \toprule
    \textbf{Param} & \textbf{Value} & \textbf{WA} & \textbf{UA} & \textbf{Note} \\
    \midrule
    \multirow{3}{*}{Curvature $c$\;(def.\;0.01)}
      & 0.003 & 46.17 & 45.25 & flat; near-Euclidean \\
      & 0.03  & 58.16 & 46.12 & workable \\
      & \warn{0.1}   & \warn{30.38} & \warn{41.04} & \warn{collapse; too curved} \\
    \midrule
    \multirow{3}{*}{Temperature $\tau$\;(def.\;0.1)}
      & \textbf{0.05} & \textbf{65.21} & 43.16 & near-full performance \\
      & \warn{0.2}   & \warn{29.27} & \warn{41.36} & \warn{logit flatness} \\
      & \warn{0.4}   & \warn{31.15} & \warn{42.07} & \warn{logit flatness} \\
    \midrule
    \multirow{3}{*}{$\lambda_\text{hyp}$\;(def.\;0.1)}
      & 0.0   & 44.29 & 45.30 & no prototype loss \\
      & 0.05  & 33.03 & 42.92 & \textit{worse} than 0.0 \\
      & 0.2   & 30.15 & 41.53 & over-regularised \\
    \midrule
    \multirow{3}{*}{$\lambda_\text{radius}$\;(def.\;0.01)}
      & 0.0   & 49.69 & 45.99 & no branch ordering \\
      & 0.005 & 60.08 & 42.44 & recovers most gain \\
      & \warn{0.02}  & \warn{30.04} & \warn{41.55} & \warn{collapse; too rigid} \\
    \bottomrule
  \end{tabular}
\end{table}

\begin{table}[h]
  \centering
  \caption{Full \method\ $\lambda_\text{hyp}\!\times\!\lambda_\text{radius}$ grid (WA\,\%).
           Baseline (0.1, 0.01): 65.52. $\Delta$ vs.\ baseline.}
  \label{tab:sens_grid}
  \fontsize{7.5pt}{9.0pt}\selectfont
  \setlength{\tabcolsep}{3.5pt}
  \begin{tabular}{c|cccc}
    \toprule
    $\lambda_h \backslash \lambda_r$ & \textbf{0.0} & \textbf{0.005} & \textbf{0.01} & \textbf{0.02} \\
    \midrule
    \textbf{0.0}
      & 65.70\;{\scriptsize$+$0.18}
      & 65.79\;{\scriptsize$+$0.27}
      & 65.43\;{\scriptsize$-$0.09}
      & 65.52\;{\scriptsize$\pm$0} \\
    \textbf{0.05}
      & 65.34\;{\scriptsize$-$0.18}
      & \warn{61.37\;{\scriptsize$-$4.15}}
      & 65.43\;{\scriptsize$-$0.09}
      & 64.80\;{\scriptsize$-$0.72} \\
    \textbf{0.1}
      & 65.52\;{\scriptsize$\pm$0}
      & 64.98\;{\scriptsize$-$0.54}
      & \textit{65.52} (base)
      & 65.70\;{\scriptsize$+$0.18} \\
    \textbf{0.2}
      & 65.70\;{\scriptsize$+$0.18}
      & 65.43\;{\scriptsize$-$0.09}
      & \best{65.97\;{\scriptsize$+$0.45}}
      & 64.98\;{\scriptsize$-$0.54} \\
    \bottomrule
  \end{tabular}
\end{table}

\noindent\textbf{Key findings.}
(1)~Among all single-factor sweeps, $\tau$ and $\lambda_\text{radius}$
are the most dangerous: both collapse to near-random ($\sim$30\%~WA)
when set too large ($\tau\!\ge\!0.2$ or $\lambda_r\!=\!0.02$).
(2)~In the full \method\ grid, 15/16 configurations achieve WA within
$\pm$0.7~pp of the baseline, confirming that HGA provides robust
geometric regularization that cushions lambda sensitivity.
(3)~The sole outlier ($\lambda_h{=}0.05, \lambda_r{=}0.005$,
$-$4.15~pp) arises from simultaneous weakness of both auxiliary losses:
neither loss alone is strong enough to establish branch separation,
and the deficiencies compound rather than cancel.
(4)~The best grid configuration ($\lambda_h{=}0.2, \lambda_r{=}0.01$)
improves by only $+$0.45~pp; the default (0.1, 0.01) is conservative
but stable.

\noindent
The near-flat performance surface in ($\lambda_\text{hyp}$,
$\lambda_\text{radius}$) space (15 of 16 configurations within $\pm$0.7\,pp)
is consistent with the geometric interpretation: both auxiliary losses
impose \emph{mild ordering constraints}---push representations to
distinct radial shells---rather than sharp boundaries, so many weight
configurations satisfy them roughly equally.
In contrast, the temperature $\tau$ and $\lambda_\text{radius}$ single-factor
sweeps (Table~\ref{tab:sens_single}) reveal hard cliff edges at $\tau{\ge}0.2$
and $\lambda_r{=}0.02$, where logit flatness or branch collapse degrades WA
to near-chance levels ($\sim$30\%).
Practitioners should therefore fix $\tau{=}0.07$ and
$\lambda_r{\in}\{0.005, 0.01\}$ first, then optionally refine
$\lambda_\text{hyp}$ over the safe range $\{0.05, 0.1, 0.2\}$;
the joint grid search reported here confirms that any combination
in this subspace produces stable, near-optimal results.

\section{Per-Layer Gromov delta-Hyperbolicity}

Figure~\ref{fig:delta_appendix} reports the full per-layer Gromov
$\delta$-hyperbolicity comparison referenced in the main paper
(Section 4.5): the true audio representation $\delta_\text{audio}$
against the random Gaussian baseline $\delta_\text{random}$ at every
sampled layer of the Qwen2-Audio audio encoder.

\begin{figure}[h]
  \centering
  \includegraphics[width=\columnwidth]{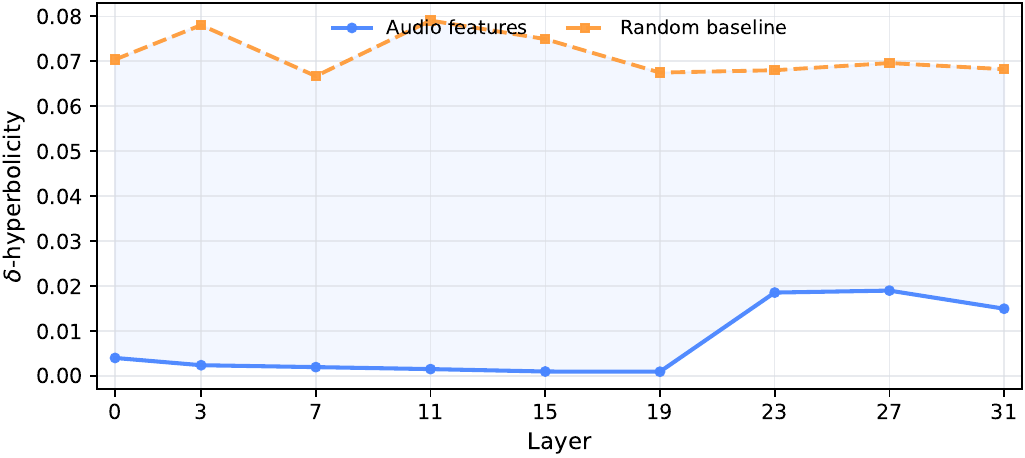}
  \caption{Gromov $\delta$-hyperbolicity of each Qwen2-Audio layer vs.\ random Gaussian baseline on MELD. Audio representations at all layers exhibit markedly lower $\delta$ than random vectors, confirming inherent hyperbolic structure.}
  \label{fig:delta_appendix}
\end{figure}

\bibliographystyle{ACM-Reference-Format}
\balance


\end{document}